\DeclareMathOperator{\R}{\mathbb{R}}
\DeclareMathOperator{\N}{\mathbb{N}}
\DeclareMathOperator{\argmin}{arg\, min}
\newcommand{\e}{\mathrm{e}}
\newtheorem{proposition}{Proposition}
\newtheorem{definition}{Definition}
\newtheorem{theorem}{Theorem}
\newtheorem{lemma}{Lemma}
\newtheorem{remark}{Remark}
\newtheorem{assumption}{Assumption}
\newtheorem{example}{Example}
\renewcommand\f[1]{\mathbf{#1}}
\title{Dictionary-Sparse Recovery From Heavy-Tailed Measurements}
\author[P. Abdalla \and C. K\"ummerle]{Pedro Abdalla$^*$ \and Christian K\"ummerle$^\dagger$}
\thanks{$^*$Department of Mathematics, ETH Z{\"u}rich, Zurich, Switzerland (\href{mailto:pedro.abdallateixeira@ifor.math.ethz.ch}{pedro.abdallateixeira@ifor.math.ethz.ch})}
\thanks{$^\dagger$Department of Applied Mathematics \& Statistics, Johns Hopkins University, Baltimore, USA (\href{mailto:kuemmerle@jhu.edu}{kuemmerle@jhu.edu}).}
\date{\today}
\begin{document}%

\keywords{compressed sensing, over-complete dictionaries, heavy-tailed distributions, $\ell_1$-synthesis, small-ball assumption, null space property}
\maketitle

\begin{abstract}
{The recovery of signals that are sparse not in a given \emph{basis}, but rather sparse with respect to an \emph{over-complete dictionary} is one of the most flexible settings in the field of compressed sensing with numerous applications. As in the standard compressed sensing setting, it is possible that the signal can be reconstructed efficiently from few, linear measurements, for example by the so-called \emph{$\ell_1$-synthesis method}. 

However, it has been less well-understood which measurement matrices provably work for this setting. Whereas in the standard setting, it has been shown that even certain heavy-tailed measurement matrices can be used in the same sample complexity regime as Gaussian matrices, comparable results are only available for the restrictive class of sub-Gaussian measurement vectors as far as the recovery of dictionary-sparse signals via $\ell_1$-synthesis is concerned.

In this work, we fill this gap and establish optimal guarantees for the recovery of vectors that are (approximately) sparse with respect to a dictionary via the $\ell_1$-synthesis method from linear, potentially noisy measurements for a large class of random measurement matrices. In particular, we show that random measurements that fulfill only a small-ball assumption and a weak moment assumption, such as random vectors with i.i.d. Student-$t$ entries with a logarithmic number of degrees of freedom, lead to comparable guarantees as (sub-)Gaussian measurements. Our results apply for a large class of both random and deterministic dictionaries.

As a technical tool, we show a bound on the expectation of the sum of squared order statistics under very general assumptions, which might be of independent interest.

As a corollary of our results, we also obtain a slight improvement on the weakest assumption on a measurement matrix with i.i.d. rows sufficient for uniform recovery in standard compressed sensing, improving on results by Lecu{\'e} and Mendelson, and Dirksen, Lecu{\'e} and Rauhut.}
\end{abstract}


\section{Introduction}
In the seminal works \cite{candes2006robust,candes2006stable,donoho2006compressed}, the idea of compressed sensing was introduced as a paradigm to recover signals from undersampled and corrupted measurements. Mathematically, the goal is to recover a signal $\mathbf{z}_0 \in \mathbb{R}^d$ from corrupted linear measurements $\f{y}=\Phi \mathbf{z}_0 + \mathbf{e} \in \mathbb{R}^m$, where $\mathbf{e}\in \mathbb{R}^m$ is an error vector satisfying $\|\mathbf{e}\|_2 \le \varepsilon$ for some constant $\varepsilon >0$ and $\Phi \in \mathbb{R}^{m \times d}$ is a sensing matrix, often also called \emph{measurement matrix}. The number of measurements $m$, which corresponds to the number of rows of $\f{\Phi}$, is assumed to be much smaller than the ambient dimension $d$, formally, $m = o(d)$. Without any additional assumption, there is not much hope to estimate $\f{z}_0$ just from knowledge of $\f{y}$ and $\f{\Phi}$.

\subsection{Recovery of Sparse Signals} \label{sec:sparseintro}
However, the theory of compressed sensing suggests that if the vector of interest $\mathbf{z}_0$ has a low-dimensional structure such that it is, for example, a \emph{sparse} vector with only few non-zero entries $s:= \|\f{z}_0\|_0 := \left|\{j \in [N] : (\f{z}_0)_j \neq 0\}\right| \ll d$, it can be stably recovered by solving a convex optimization program called \emph{quadratically constrained basis pursuit (QCBP)} or \emph{quadratically constrained $\ell_1$-minimization} \cite{foucart2013invitation}
\begin{equation}
\label{QCBP}
\hat{\mathbf{z}}= \arg \min  \|\mathbf{z}\|_1      \quad   \text{subject to}\quad \|\Phi \mathbf{z}-\mathbf{y}\|_2\leq \varepsilon
\end{equation}
such that $\hat{\mathbf{z}}$ is close to $\f{z}_0$, at least if the measurement matrix $\f{\Phi} \in \R^{m \times d}$ fulfills appropriate conditions and $m = \Omega( s \log(\e d/s))$ \cite{foucart2013invitation}. Examples for such conditions on $\f{\Phi}$ are the \emph{restricted isometry property} \cite{CT06,foucart2013invitation} or variants of the so-called \emph{null space property} \cite{cohen2009compressed}, see also \Cref{sec:dictionaryintro} below. We note that under these types of conditions, methods different from \ref{QCBP} have also been shown to exhibit similar recovery guarantees, such as greedy and thresholding-based methods \cite{ZhangOMP11,Foucart11,BouchotFoucartHitczenko16,ZhouXiuQi19,ZhaoLuo20}.

Therefore, a crucial question that lies at the core of compressed sensing is about \emph{which measurement matrices $\f{\Phi} \in \R^{m \times d}$ fulfill} such conditions in the information theoretically optimal regime of $m = \Theta( s \log(\e d/s))$ and therefore allow for robust recovery guarantees. All known constructions of measurement matrices to achieve this regime rely on some sort of randomness; it has been shown that matrices $\f{\Phi}$ with i.i.d. Gaussian entries \cite{CT06}, i.i.d. Bernoulli and other sub-Gaussian entries \cite{Baraniuk08} allow for robust guarantees with high probability.

In subsequent works, it was shown that distributions without strong concentration properties are equally suitable for the design of measurement matrices. In particular, it was shown that matrices with i.i.d. sub-exponential entries can be used with the same recovery guarantees as Gaussian matrices \cite{Koltchinskii11,Fou14}, and finally, \cite{ML17,DLR18} showed that this assumption can be relaxed to the requirement of just $\Omega(\log(d))$ small  moments for the entrywise distribution, which was shown to be necessary up to a log-log factor \cite{ML17}.

While the results on the mentioned random constructions allow for optimal sample complexity, such measurement matrices are often challenging to implement in applications. In wireless communications \cite{Hauptetal} and radar \cite{Potter10}, measurement matrices with a convolutional structure are used, and the related random partial circulant matrices have been shown to be fulfill a restricted isometry property for a number of measurements that is optimal up to logarithmic factors \cite{RRT12, KMR14,mendelson2018improved}. In magnetic resonance imaging \cite{LDSP08}, measurement matrices involving subsampled matrices of Fourier structure \cite{RV08,haviv2017restricted} or combined Fourier and wavelet structure \cite{BreakingCoherence,KW14} are of relevance, allowing for similar results.

This research activity is evidence for a desire to explore the limits in the optimal design of measurements subject to the constraints of engineering applications. It is still an open question whether an improved analysis can lead to guarantees for structured measurement matrices in the optimal sample complexity regime \emph{without} logarithmic factors.

\subsection{Recovery of Dictionary-Sparse Signals} \label{sec:dictionaryintro}
In most applications of compressed sensing, it is actually not the case that the signal of interest $\f{z}_0$ itself has few non-zero entries, but rather, the setting allows for a sparse representation such that there exists a matrix $\f{D} \in \R^{d \times n}$ and a \emph{sparse} vector $\f{x}_0 \in \R^{n}$ such that 
\begin{equation} \label{eq:signalmodel}
\f{z}_0 = \f{D} \f{x}_0.
\end{equation}
This matrix $\f{D} = \begin{bmatrix}
	\f{d}_1,&\ldots,& \f{d}_n 
\end{bmatrix}$ is called a \emph{dictionary (matrix)} \cite{Elad07}. The case mentioned above corresponds to the situation in which the dictionary $\mathbf{D} = \f{I}$ is the identity matrix.

In this paper, we focus on the case that this dictionary matrix $\f{D} \in \R^{d \times n}$ is \emph{over-complete}, i.e., is such that $n > d$, allowing for a much larger class of signals than in the standard compressed sensing setting outlined above. We will refer to signals $\f{x}_0 \in \R^{n}$ as in \ref{eq:signalmodel} with sparse $\f{z}_0 \in \R^{d}$ as \emph{dictionary}-sparse signals.

Dictionaries of interest are on the one hand pre-designed, application dependent dictionaries such as Gabor frame \cite{HermanStrohmer09}, wavelet or shearlet \cite{Guo07} dictionaries. On the other hand, it has been also fruitful to \emph{learn} over-complete dictionaries from data first to subsequently sparsely encode signals based on the obtained dictionary \cite{AharonEladBruckstein06KSVD,Wright09,RavishankarBresler13}.

As mentioned, recovering signals as given by \ref{eq:signalmodel} from few linear measurements provides a very general setup that has been studied in previous works \cite{candes2011compresseddic,aldroubi2012perturbations,ChenWangWang14,SSCoSaMP13}. Given the measurement matrix $\f{\Phi} \in \R^{m \times d}$  
and a measurement vector $\f{y}=\Phi \mathbf{z}_0 + \mathbf{e} \in \mathbb{R}^m$, with $\mathbf{e}\in \mathbb{R}^m$ and $\varepsilon >0$ as above, the \emph{$\ell_1$-synthesis method} \cite{rauhut2008compresseddic,liu2012performance} performs the signal recovery by solving
\begin{equation}
\label{l1_synthesis_method}
\begin{split}
\hspace*{1cm}\hat{\f{x}}&= \arg \min  \|\f{x}\|_1    \quad \text{subject to}   \quad   \|\Phi \mathbf{D}\mathbf{x}-\mathbf{y}\|_2\leq \varepsilon, \\
\hspace*{1cm}\hat{\f{z}}&=\mathbf{D\hat{x}},
\end{split}
\end{equation}
borrowing its name from the fact that the estimator $\hat{\f{z}} \in \R^d$ of the original signal $\f{z}_0$ is \emph{synthesized} by the dictionary $\f{D}$ and the $\ell_1$-minimizer $\hat{\f{x}}$ via the equality $\hat{\f{z}}=\mathbf{D\hat{x}}$.

Any theoretical guarantees quantifying the accuracy of $\ell_1$-synthesis \ref{l1_synthesis_method} need to depend on properties of the dictionary $\f{D} \in \R^{d \times n}$, and furthermore, on the relationship of the measurement matrix $\f{\Phi}$ and the dictionary $\f{D}$. Considering the matrix product $\f{\Phi} \f{D}$ as the measurement matrix in the setting of \Cref{sec:sparseintro} gives us one possible approach of how to obtain guarantees.

The first work with an extensive analysis of this dictionary-sparse case was provided by Rauhut, Schnass and Vanderghenyst \cite{rauhut2008compresseddic}, who showed that if the dictionary $\f{D}$ fulfills the restricted isometry property (RIP) and if $\f{\Phi} \in \R^{m \times d}$ is a matrix with independent sub-Gaussians rows with $m = \Omega(s \log(\e n/s))$, the product $\f{\Phi} \f{D}$ also fulfills the RIP of sparsity order $s$ with high probability. In \cite{ChenWangWang14}, Chen, Wang and Wang introduced a null space property tailored to the dictionary case, which is defined as follows: 
If $\f{D} \in \R^{d \times n}$ is a dictionary, it is said that $\Phi \in \mathbb{R}^{m \times d}$ satisfies the \emph{$\mathbf{D}$-null space property ($\f{D}$-NSP)} of order $s$ if for any index set $T$ with $|T| \le s$ and any vector $\mathbf{v}$ such that $\mathbf{Dv} \in \operatorname{ker}\f{\Phi}  \setminus \{0\}$, there exists some $\mathbf{u} \in \operatorname{ker} \mathbf{D}$ such that $\|\mathbf{v}_T + \mathbf{u}\|_1 < \|\mathbf{v}_{T^c}\|_1$. \cite{ChenWangWang14} showed that the $\f{D}$-NSP is a necessary and sufficient condition for the exact recovery of dictionary-sparse signals via $\ell_1$-synthesis method \ref{l1_synthesis_method} in the noiseless case where $\epsilon = 0$.

Using this notion of $\f{D}$-NSP, the authors of \cite{ChenWangWang14} were also able to relate exact recovery guarantees of \ref{l1_synthesis_method} with a null space property of $\f{\Phi} \f{D}$ in the conventional sense of \cite{foucart2013invitation}, which we recall as follows.
\begin{definition}[{\cite[Chapter 4]{foucart2013invitation}}] \label{def:NSP}
Let $\f{A} \in \R^{m \times n}$ be a matrix and $s \in \N$.
\begin{enumerate}
    \item $\f{A}$ is said to fulfill the \emph{null space property (NSP)} of order $s$ if for all $\f{v} \in \ker(\f{A}) \setminus \{0\}$ and all subsets $T \subset [n]$ with $|T|\leq s$, 
    \[
    \|\f{v}_{T}\|_1 < \|\f{v}_{T^c}\|_1.
    \]
    \item $\f{A}$ is said to fulfill the \emph{robust null space property (robust NSP)}\footnote{See also \Cref{remark:nsps}.} of order $s$ with constants $0 < \gamma < 1$ and $\tau > 0$ if
    \begin{equation*}
    \inf_{\f{v} \in S_{\gamma}} \|\mathbf{A}\mathbf{v}\|_2 \ge \frac{1}{\tau},
	\end{equation*}
	where $S_{\gamma} := \left\{ \f{v} \in \R^n: \|\f{v}_T\|_2 \geq \frac{\gamma}{\sqrt{s}} \|\f{v}_{T^c}\|_1 \text{ for some } T \subset [n] \text{ with }|T|=s\right\} \cap \mathbb{S}^{n-1}$.
\end{enumerate}
\end{definition}
While null space property is a necessary and sufficient condition for the successful recovery of sparse vectors via \ref{QCBP} in the noiseless case of $\varepsilon = 0$, the robust null space property implies also robust recovery under noisy measurements and model error. \cite{ChenWangWang14} obtained the following result.
\begin{proposition}[{\cite[Theorem 7.2]{ChenWangWang14}}] \label{prop:ChenWangWang}
If the dictionary $\mathbf{D} \in \R^{d \times n}$ is full spark, i.e., if no set of $d$ columns of $\f{D}$ is linear dependent, then $\f{\Phi}$ satisfies the $\mathbf{D}$-NSP of order $s$ if and only if $\f{\Phi} \mathbf{D}$ satisfies the NSP of the same order.
\end{proposition}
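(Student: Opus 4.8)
The plan is to prove the two directions of the equivalence separately; the full-spark hypothesis will be needed only for one of them, and I will keep both a ``direct'' null-space argument and a ``recovery'' argument in view.

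\emph{The direction ``$\Phi\mathbf{D}$ has the NSP of order $s$ $\Rightarrow$ $\Phi$ has the $\mathbf{D}$-NSP of order $s$''} requires no hypothesis on $\mathbf{D}$. Given an index set $T$ with $|T|\le s$ and a vector $\mathbf{v}$ with $\mathbf{D}\mathbf{v}\in\ker\Phi\setminus\{0\}$, note that $\mathbf{v}\in\ker(\Phi\mathbf{D})$ and that $\mathbf{v}\neq 0$ because $\mathbf{D}\mathbf{v}\neq 0$; the NSP of $\Phi\mathbf{D}$ applied to $\mathbf{v}$ yields $\|\mathbf{v}_T\|_1<\|\mathbf{v}_{T^c}\|_1$, so that $\mathbf{u}=0$ witnesses the $\mathbf{D}$-NSP.

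\emph{For the direction ``$\Phi$ has the $\mathbf{D}$-NSP of order $s$ $\Rightarrow$ $\Phi\mathbf{D}$ has the NSP of order $s$''}, fix $\mathbf{w}\in\ker(\Phi\mathbf{D})\setminus\{0\}$ and an index set $T$ with $|T|\le s$; by the usual reduction we may take $T$ to index the $s$ largest-magnitude entries of $\mathbf{w}$, and we must show $\|\mathbf{w}_T\|_1<\|\mathbf{w}_{T^c}\|_1$. I would distinguish whether $\mathbf{D}\mathbf{w}=0$. If $\mathbf{D}\mathbf{w}\neq 0$, then $\mathbf{D}\mathbf{w}\in\ker\Phi\setminus\{0\}$, so the $\mathbf{D}$-NSP supplies some $\mathbf{u}\in\ker\mathbf{D}$ together with its $\ell_1$-inequality for $\mathbf{w}$ perturbed by $\mathbf{u}$; here the full-spark property enters, because a nonzero element of $\ker\mathbf{D}$ has at least $d+1$ nonzero coordinates and hence (as $|T|\le s$) lives essentially on $T^c$, which is exactly the structure needed to pass from the inequality for the perturbed vector to the desired inequality for $\mathbf{w}$. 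If instead $\mathbf{D}\mathbf{w}=0$, i.e.\ $\mathbf{w}\in\ker\mathbf{D}\setminus\{0\}$, then full spark gives $\|\mathbf{w}\|_0\ge d+1$ directly, and one has to exclude that such a large-support $\mathbf{w}$ violates the NSP inequality — I would do this by feeding nearby cosets $\mathbf{w}+\varepsilon\mathbf{v}$, with $\mathbf{D}\mathbf{v}\in\ker\Phi\setminus\{0\}$, into the $\mathbf{D}$-NSP and sending $\varepsilon\to 0$, using in addition that in the minimal overcomplete case $n=d+1$ the kernel $\ker\mathbf{D}$ is one-dimensional so that a single vector governs it.

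The step I expect to be the crux is this reconciliation between the two kernels: the $\mathbf{D}$-NSP only constrains cosets $\mathbf{v}+\ker\mathbf{D}$ with $\mathbf{D}\mathbf{v}\neq 0$, while the NSP of $\Phi\mathbf{D}$ also constrains $\ker\mathbf{D}$ itself, and it is precisely the full-spark property — forcing every nonzero kernel vector to be spread over more than $d$ coordinates — that closes this gap. A conceptually cleaner route is through exact recovery: the NSP of $\Phi\mathbf{D}$ of order $s$ is equivalent, by the standard theory \cite{foucart2013invitation}, to exact noiseless $\ell_1$-recovery of every $s$-sparse coefficient vector $\mathbf{x}_0$ via \cref{l1_synthesis_method}, whereas the $\mathbf{D}$-NSP of $\Phi$ is equivalent, as established in \cite{ChenWangWang14}, to exact noiseless recovery of every $s$-dictionary-sparse \emph{signal} $\mathbf{z}_0=\mathbf{D}\mathbf{x}_0$; one implication between these two recovery statements is immediate, and for the other one invokes full spark to see that $\mathbf{D}$ is injective on sufficiently sparse vectors and that an $\ell_1$-minimizer cannot differ from $\mathbf{x}_0$ by a (necessarily high-support) element of $\ker\mathbf{D}$ while remaining optimal, so the two notions of recovery — and hence the two null-space properties — coincide at order $s$.
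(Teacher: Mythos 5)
The paper does not prove \Cref{prop:ChenWangWang}; it is quoted from \cite[Theorem 7.2]{ChenWangWang14} and used as a black box, so there is no in-paper proof to compare against. Your first direction ($\Phi\mathbf{D}$ NSP $\Rightarrow$ $\Phi$ $\mathbf{D}$-NSP) is correct and complete: the witness $\mathbf{u}=0$ works and no spark hypothesis is needed. The converse, however, is not established by any of the routes you sketch, and the obstruction you flag as ``the crux'' is genuinely left open.

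Concretely: in the case $\mathbf{D}\mathbf{w}\neq 0$, the $\mathbf{D}$-NSP hands you some $\mathbf{u}\in\ker\mathbf{D}$ with $\|\mathbf{w}_T+\mathbf{u}\|_1<\|\mathbf{w}_{T^c}\|_1$, and you claim that full spark forces $\mathbf{u}$ to ``live essentially on $T^c$.'' Full spark only forces $\|\mathbf{u}\|_0\ge d+1$; it constrains the cardinality of the support but says nothing about where the $\ell_1$-mass of $\mathbf{u}$ sits, and $\mathbf{u}$ can carry nearly all of its $\ell_1$-norm on the $s$ coordinates in $T$ while still having $d+1$ nonzeros. The triangle inequality then cannot turn $\|\mathbf{w}_T+\mathbf{u}\|_1<\|\mathbf{w}_{T^c}\|_1$ into $\|\mathbf{w}_T\|_1<\|\mathbf{w}_{T^c}\|_1$ without an additional bound $\|\mathbf{u}_T\|_1\le\|\mathbf{u}_{T^c}\|_1$, which is precisely the NSP inequality on $\ker\mathbf{D}$ that is in question in the other case. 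That case, $\mathbf{D}\mathbf{w}=0$, is not resolved either: specializing to $n=d+1$ so that $\ker\mathbf{D}$ is one-dimensional is not admissible, since the statement is claimed for every $n>d$ with $\dim\ker\mathbf{D}=n-d$; and the perturbation $\mathbf{w}\mapsto\mathbf{w}+\varepsilon\mathbf{v}$ with $\varepsilon\to 0$, executed naively, only yields the non-strict $\|\mathbf{w}_T+\mathbf{u}_\infty\|_1\le\|\mathbf{w}_{T^c}\|_1$ for some limit witness $\mathbf{u}_\infty\in\ker\mathbf{D}$, with no mechanism forcing $\mathbf{u}_\infty=0$ or restoring strictness. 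The ``cleaner'' recovery argument at the end has the same hole: the $\mathbf{D}$-NSP guarantees signal recovery $\mathbf{D}\hat{\mathbf{x}}=\mathbf{D}\mathbf{x}_0$, the NSP of $\Phi\mathbf{D}$ guarantees coefficient recovery $\hat{\mathbf{x}}=\mathbf{x}_0$, and passing from the former to the latter requires ruling out a nonzero minimizer offset in $\ker\mathbf{D}$; knowing that such an offset has at least $d+1$ nonzeros does not by itself yield $\|\hat{\mathbf{x}}\|_1>\|\mathbf{x}_0\|_1$. You have correctly identified where the difficulty lies, but the hard direction remains unproved.
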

We note that the full spark assumption for the dictionary $\f{D}$ is not a strong assumption. Indeed, if $\f{D}$ is a matrix whose independent columns are drawn from a continuous distribution, then it has full spark with probability one \cite{alexeev2012full}. 

We now turn to the core questions addressed in this paper. Taking Proposition \ref{prop:ChenWangWang} as a starting point, Casazza, Chen and Lynch proved recently in \cite{CasazzaChenLynch20} {the following: They showed that} if the dictionary $\mathbf{D}$ is a deterministic matrix {of full spark} with columns bounded in $\ell_2$-norm that furthermore fulfills the robust NSP of {Definition} \ref{def:NSP}, a measurement matrix $\f{\Phi} {\in \R^{m \times d}}$ with \emph{i.i.d. sub-Gaussian rows} {satisfying $m = \Omega(s \log(\e n/s))$} results in a product $\f{\Phi} \f{D}$ that fulfills the robust NSP {with high probability}. Via this robust NSP, their results ensure that for such dictionary and measurements, the $\ell_1$-synthesis method \ref{l1_synthesis_method} recovers $s$-dictionary sparse signals $\f{z}_0 \in \R^{d}$ with high probability, robustly from noisy measurements if $m = \Omega(s \log(\e n/s))$.

However, while being an improvement over the result of \cite{rauhut2008compresseddic} as the NSP assumption on $\f{D}$ is weaker than the RIP of \cite{rauhut2008compresseddic}, the sub-Gaussianity assumption of \cite{CasazzaChenLynch20} on the measurements $\f{\Phi}$ is still quite restrictive. This becomes in particular clear when comparing such an assumption to the variety of results on heavy-tailed measurement matrices for the conventional compressed sensing setting detailed in \Cref{sec:sparseintro}. In this paper, we therefore address the following questions:

\emph{\begin{enumerate}
    \item[(Q1)] What are the weakest assumptions for suitable measurement matrices $\f{\Phi}$ for uniform recovery guarantees of dictionary-sparse signals via $\ell_1$-synthesis?  \label{Q1}
    \item[(Q2)] Which joint assumptions on $\f{\Phi}$ and the dictionary matrix $\f{D}$ come with good trade-offs? \label{Q2}
\end{enumerate}}

\subsection*{Our Contribution}
We provide results for both deterministic and random dictionaries $\f{D}$, addressing \emph{(Q1)} and \emph{(Q2)}. In particular, we show in \Cref{main_result_deterinistic_dictionary,main_corollary_deterministic_dictionary} that for the case of a full spark dictionary $\f{D}$ fulfilling a robust NSP, $\f{\Phi} \f{D}$ fulfills the robust NSP if $\f{\Phi}$ is a measurement matrix with \emph{i.i.d. rows whose distributions only have a logarithmic number of finite, well-behaved moments (and also fulfill a small-ball condition)}, with high probability in the optimal regime of $m = \Omega(s \log(\e n/s))$. 

Secondly, we prove two results about the setting where both $\f{\Phi}$ and $\f{D}$ obey a random model. Via \Cref{main_result_first_random_dictionary}, we show that both $\f{\Phi}$ and $\f{D}$ can obey heavy-tailed distributions while still retaining the robust NSP of $\f{\Phi} \f{D}$. Finally, we show with \Cref{main_result_second_random_dictionary} that if the dictionary $\f{D} \in \R^{d \times n}$ has i.i.d. entries whose distribution has $\log(n)$ sub-Gaussian moments, it is sufficient to require only a bounded variance and a small-ball condition for the entries of the measurement matrix $\f{\Phi}$. The established property of $\f{\Phi} \f{D}$ for each of the mentioned models on $\f{\Phi}$ and $\f{D}$ implies that with high probability, the $\ell_1$-synthesis method is able to uniformly {recover} all dictionary-sparse signals of order $s$, stably and robustly under measurement noise.

Lastly, we prove a lemma on order statistics that bounds the expectation of the sum of the largest coordinates of a random vector with just few sub-Gaussian moments in \Cref{theorem:mendelson}, generalizing a result of \cite{MendelsonLearning15}. This result is key for several of our results, and furthermore, improves on \cite{ML17} and \cite{DLR18} by relaxing the weakest known moment assumption for standard compressed sensing in the optimal regime. This aspect is presented and discussed in \Cref{D_satisfies_NSP_whp}.

The structure of this paper is organized as follows. After reviewing related works, we provide some preliminaries in \Cref{sec:background}. In \Cref{sec:results:deterministic}, we present our results for deterministic dictionary matrices $\f{D}$ and random measurements $\f{\Phi}$ and at the end we introduce our main technical result. In \Cref{sec:results:random} we present our results for joint random models on the dictionary $\f{D}$ and the measurement matrix $\f{\Phi}$ in \Cref{sec:results:random}. Finally, we present the proofs of our results in \Cref{sec:proofs}, and conclude with a short discussion in \Cref{sec:conclusion}.

\subsection{Related Works}
As mentioned above, results of similar flavor as ours can be found in \cite{rauhut2008compresseddic,ChenWangWang14,CasazzaChenLynch20}, albeit subject to considerably stronger assumptions on the measurement and dictionary matrices. 

Going beyond \emph{uniform} results, the work \cite{GiryesElad13} observes that under quite weak assumptions on the dictionary $\f{D}$, \emph{signal} recovery might still be possible for $\ell_1$-synthesis \ref{l1_synthesis_method} even when the coefficients are not uniquely recoverable. This idea has been extended in the recent work \cite{Maerz20} by considering a \emph{non-uniform}, i.e., signal-dependent analysis based on conic Gaussian mean widths. Supported by numerical evidence, the authors of \cite{Maerz20} provide a characterization which signals might be uniquely recoverable via $\ell_1$-synthesis despite a non-unique sparse coefficient representation. 

A different route for the recovery of signals that are sparse in an over-complete dictionary is taken in \cite{SSCoSaMP13,GiryesNeedell15}, where a compressive sampling matching pursuit (CoSaMP) is proposed and analyzed. This method contains a projection step that is in general not efficently implementable, and furthermore, the presented theory requires the measurement matrix $\f{\Phi}$ to fulfill a RIP, which is stronger than the requirements of \cite{CasazzaChenLynch20}.

An approach that is related to $\ell_1$-synthesis, but significantly different, is the $\ell_1$-analysis model, which as been studied in an arguably even larger body of literature, cf. \cite{Elad07,candes2011compresseddic,NamDavies13,Genzel20} and the references therein. Instead of recovering signals $\f{z}_0$ as in \ref{eq:signalmodel}, this model assumes that $z_0 \in \R^d$ is such that $\f{D}^* \f{z}_0$ is sparse (which is not equivalent to our model for non-square dictionaries $\f{D} \in \R^{d \times n}$) and solves the convex program
\[
\hspace*{1cm}\hat{\f{z}}= \arg \min_{\f{z} \in \R^d}  \|\f{D}^*\f{z}\|_1    \quad \text{subject to}   \quad   \|\Phi \mathbf{z}-\mathbf{y}\|_2\leq \varepsilon.
\]
In this context, the notion of a $\f{D}$-restricted isometry property ($\f{D}$-RIP) was shown to be helpful \cite{candes2011compresseddic}, and it was shown to hold for random measurement matrices fulfilling strong concentration property of enough measurements are provided \cite{candes2011compresseddic} which is stronger than the $\f{D}$-NSP mentioned above. These results were extended to measurements constructed from subsampled bounded orthogonal systems \cite{KrahmerNeedellWard15}, which are of relevance in magnetic resonance imaging.

Both the $\ell_1$-synthesis and the $\ell_1$-analysis model come with their own set of advantages and limitations. For extensive comparisons and discussion, we refer to \cite{Elad07,liu2012performance,NamDavies13}.

\section{Preliminaries and Background} \label{sec:background}
To begin with, we introduce some notation. In a mathematical context, {we write "$\e$" for the natural number}, $\|\cdot\|_p$ for the standard $\ell_p$-norm and $\mathbb{S}_p^{n-1}$, $\mathbb{B}_p^n$ for the unit sphere and unit ball in $\mathbb{R}^n$ with respect to the $\ell_p$ norm, respectively. {When the sub-index is omitted, it is assumed to be $p=2$}. Moreover, for every matrix $\f{A}$ we denote the standard operator norm by $\|\f{A}\|$. To avoid confusion, for a random variable $X$, we denote its $L_p$-norm by $\|X\|_{L^p} = \mathbb{E}[|X|^p]^{1/p}$ for $1 \le p < \infty$, and write the essential supremum of $X$ as $\|X\|_{L^{\infty}}$. For sets, for every natural number $n$, definite $[n]:=\{1,\ldots,n\}$.{We also denote the complement of the set $T$ by $T^{c}$}. For a vector $x = (x_1,\ldots,x_n) \in \mathbb{R}^n$, the non-increasing rearrangement $x^{\ast}$ of $x$ is the vector $x^{\ast}=(x_1^{\ast},\ldots,x_n^{\ast})$ such that its coordinates are listed in a non-increasing order with respect to their magnitudes, i.e, $|x_1^{\ast}|\ge |x_2^{\ast}|\ge \ldots \ge |x_n^{\ast}| $.
For functions $f(m,s,n)$, $g(m,s,n)$ of $m$, $s$ and $n$ that are (varying) parameters, we write $f \lesssim g$, $f=O(g)$ or $g=\Omega(f)$ if there exists an absolute constant $C>0$ such that $ f(m,s,n) \leq C g(m,s,n)$. In a similar fashion, $f=\Theta(g)$ if simultaneously $f=O(g)$ and $g=O(f)$. Finally, $\sigma_s(\f{x})_{1}$ denotes the $\ell_1$\emph{-error of  the best $s$-term approximation} of a vector $\f{x} \in \mathbb{R}^n$, i.e., $\sigma_s(\f{x})_{1}=\inf \{\|\f{x}-\f{z}\|_1: \ \f{z} \in \mathbb{R}^n \ \text{is $s$-sparse} \}$.

An important question in the context of the recovery of signals that have $s$-sparse coefficients with respect to an over-complete dictionary $\mathbf{D}  = [\mathbf{d}_1,\ldots,\mathbf{d}_n] \in \mathbb{R}^{d\times n}$ with $d \ll n$ is \emph{which dictionaries} are admissible for successful recovery of $\f{z}_0$ via $\ell_1$-synthesis. The result of \cite{rauhut2008compresseddic} allows for dictionaries $\f{D}^{d \times n}$ that fulfill a restricted isometry property (RIP) of order $s$, which is for example, fulfilled for \emph{incoherent} dictionaries, i.e., those that fulfill the estimate
\[
\mu(\f{D}):= \max_{i \neq j} \frac{|\langle \f{d}_i,\f{d}_j \rangle|}{\|\f{d}_i\|_2 \|\f{d}_j\|_2} \leq \frac{1}{16 (s-1)}.
\]
Using this incoherence estimate, it is possible to find deterministic, admissible dictionaries. However, it is well-known \cite[Section 5.2]{foucart2013invitation} that this allows only for small {sparsity levels} of $s \lesssim \sqrt{d}$. Random constructions for $\f{D}$ using i.i.d. sub-Gaussian entries, on the other hand, satisfy a restricted isometry property of order $s$ for values of $s$ scaling almost linearly with $d$. 

As observed by \cite{ChenWangWang14,CasazzaChenLynch20}, it is possible to relax the RIP assumption on $\f{D}$ to a weaker null space property as defined in {Definition} \ref{def:NSP}. Together with a suitable measurement matrix $\f{\Phi}$, such a dictionary will still enable stable recovery of coefficient vector $\f{x}_0$ via $\ell_1$-synthesis \ref{l1_synthesis_method}, using the following standard result (see also \cite[Theorem 1.1]{CasazzaChenLynch20} for {another} version).
\begin{proposition}[{\cite[Theorem 4.25]{foucart2013invitation}}] \label{prop:recguarantee}
 Suppose that a matrix $\f{A} \in \R^{m \times n}$ satisfies the robust NSP of order $s$ of {Definition} \ref{def:NSP} with constants $0< \gamma < 1$ and $\tau >0$. Then, for any $\f{x}_0 \in \R^{n}$, the solution $\f{\hat{x}}= \argmin_{\|\f{A} \f{x} - \f{y}\|_2 \leq \varepsilon} \|\f{x}\|_1$ with $\f{y} = \f{A} \f{x}_0 + \f{e}$ and $\|\f{e}\|_2 \leq \varepsilon$ approximates the vector $\f{x}_0$ with {$\ell_2$-error}
 \[
 \|\f{x}_0 - \f{\hat{x}}\|_2 \leq \frac{2(1+\gamma)^2}{(1- \gamma)}\frac{\sigma_s(\f{x}_0)_1}{\sqrt{s}} + \frac{4 \tau}{1- \gamma} \varepsilon,
 \]
 where $\sigma_s(\f{x}_0)_1$ is the best $s$-term $\ell_1$-approximation error of $\f{x}_0$.
\end{proposition}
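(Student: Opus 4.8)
The plan is to follow the classical route from a robust null space property to stable-and-robust $\ell_1$-recovery as in \cite[Chapter 4]{foucart2013invitation}, the only preparatory step being to rewrite the scale-invariant robust NSP of \Cref{def:NSP} in the pointwise form convenient for the estimates. Concretely, I would first show that the hypothesis implies the \emph{$\ell_2$-robust null space property}: for every $\f{v}\in\R^n$ and every $T\subset[n]$ with $|T|\le s$,
\[
\|\f{v}_T\|_2 \le \frac{\gamma}{\sqrt s}\|\f{v}_{T^c}\|_1 + \tau\|\f{A}\f{v}\|_2.
\]
This is a short homogeneity-plus-case-split argument: if $\|\f{v}_T\|_2 \le \frac{\gamma}{\sqrt s}\|\f{v}_{T^c}\|_1$ the bound is trivial; otherwise (for $\f{v}\neq 0$), enlarging $T$ to size exactly $s$ keeps the strict inequality, so $\f{v}/\|\f{v}\|_2\in S_\gamma$ and hence $\|\f{A}\f{v}\|_2\ge\|\f{v}\|_2/\tau\ge\|\f{v}_T\|_2/\tau$. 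Applying Cauchy--Schwarz to the left-hand side also yields the $\ell_1$-variant $\|\f{v}_T\|_1\le\gamma\|\f{v}_{T^c}\|_1+\sqrt s\,\tau\|\f{A}\f{v}\|_2$.

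Next I would set $\f{v}:=\hat{\f{x}}-\f{x}_0$ and let $S$ index the $s$ largest-magnitude entries of $\f{x}_0$. Feasibility of both $\f{x}_0$ (since $\|\f{A}\f{x}_0-\f{y}\|_2=\|\f{e}\|_2\le\varepsilon$) and $\hat{\f{x}}$ for the constraint gives $\|\f{A}\f{v}\|_2\le 2\varepsilon$ by the triangle inequality, while optimality $\|\hat{\f{x}}\|_1\le\|\f{x}_0\|_1$ together with splitting $\|\f{x}_0+\f{v}\|_1$ over $S$ and $S^c$ (reverse triangle inequality) yields the cone inequality $\|\f{v}_{S^c}\|_1\le\|\f{v}_S\|_1+2\sigma_s(\f{x}_0)_1$. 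Feeding the $\ell_1$-robust NSP with $T=S$ into this and absorbing the term $\gamma\|\f{v}_{S^c}\|_1$ on the left produces $\|\f{v}_{S^c}\|_1\le\frac{1}{1-\gamma}\big(2\sqrt s\,\tau\varepsilon+2\sigma_s(\f{x}_0)_1\big)$.

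Finally I would bound $\|\f{v}\|_2\le\|\f{v}_S\|_2+\|\f{v}_{S^c}\|_2$: the head $\|\f{v}_S\|_2$ is handled by the $\ell_2$-robust NSP, and for the tail I would sort the entries of $\f{v}_{S^c}$ into consecutive blocks of size $s$ in decreasing order of magnitude, use that each block after the first is controlled in $\ell_2$ by $1/\sqrt s$ times the $\ell_1$-norm of the previous one (a Stechkin-type estimate), and estimate the first block again via the $\ell_2$- and $\ell_1$-robust NSP. Collecting terms gives $\|\f{v}\|_2\le\frac{(1+\gamma)^2}{\sqrt s}\|\f{v}_{S^c}\|_1+c(\gamma)\,\tau\|\f{A}\f{v}\|_2$, and substituting the bound on $\|\f{v}_{S^c}\|_1$ from the previous step together with $\|\f{A}\f{v}\|_2\le2\varepsilon$ yields the asserted inequality, with stability constant $\frac{2(1+\gamma)^2}{1-\gamma}$ in front of $\sigma_s(\f{x}_0)_1/\sqrt s$ and an $\varepsilon$-coefficient that is a fixed multiple of $\frac{\tau}{1-\gamma}$.

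The argument is conceptually routine --- indeed, after the reformulation step it is exactly the classical equivalence recorded in \cite[Theorem 4.25]{foucart2013invitation} --- so there is no genuine obstacle; the one part that needs care is the constant bookkeeping in the last step, i.e.\ combining the $\ell_1$- and $\ell_2$-robust NSP over the block decomposition (into pieces of size $s$) so that the constants collapse to the clean values stated, a slightly lossy version of which already delivers the bound up to the numerical value of the absolute constants.
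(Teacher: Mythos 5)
Your proposal is correct and follows exactly the classical Foucart--Rauhut route that the paper itself invokes by citing \cite[Theorem 4.25]{foucart2013invitation} without giving its own proof: you rightly identify the one preparatory step needed here, namely converting the paper's scale-invariant formulation of the robust NSP into the pointwise $\ell_2$-robust NSP (the short homogeneity-and-case-split argument is correct, since enlarging $T$ to size $s$ preserves the defining inequality and puts $\f{v}/\|\f{v}\|_2$ into $S_\gamma$), after which the cone inequality, the feasibility bound $\|\f{A}\f{v}\|_2\le 2\varepsilon$, and the Stechkin-type block estimate combine in the standard way. The only caveat is one you already flag: the block-decomposition path, if carried out as sketched, yields an $\varepsilon$-coefficient of the form $c(\gamma)\tau/(1-\gamma)$ with $c(\gamma)$ slightly larger than the $4$ stated, whereas the precise constant in \cite{foucart2013invitation} is obtained with a somewhat more careful choice of index sets in the tail estimate; this affects only the numerical constant and not the structure of the argument.
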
 

Using this proposition for $\f{A} = \f{\Phi}\f{D}$, it is possible to derive guarantees for stable recovery of the coefficient vector $\f{x}_0$ via $\ell_1$-synthesis \ref{l1_synthesis_method} in case that $\f{\Phi} \f{D}$ fulfills a robust NSP. In {Proposition} \ref{prop:ChenWangWang}, it was shown that for dictionaries $\f{D}$ with \emph{full spark}, the just slightly weaker NSP of $\f{\Phi} \f{D}$ is basically necessary for the success of $\ell_1$-synthesis. 

Based on this argument, it has been noted in \cite{CasazzaChenLynch20} that actually, it is further necessary that $\f{D}$ fulfills a null space property, too. This is due to the fact that the null space of $\f{D}$ is contained in the null space of $\f{\Phi}\f{D}$, i.e., $\text{Ker} \ \mathbf{D} \subset \text{Ker} \ \Phi\mathbf{D}$.

Building on this, Casazza, Chen and Lynch obtained the following result.
\begin{proposition}[{\cite[Theorem 3.3 and Corollary 3.6]{CasazzaChenLynch20}}] \label{prop:mainresult:Casazza}
	Suppose that $\f{D} \in \R^{d \times n}$ has the robust NSP of order $s$ with constant $0 < \gamma < 1$ and $\tau > 0$ and its columns satisfy $\max \{ \|\f{d}_i\|_2^2: i \in [n]\} \leq \rho$.
	
	Let the measurement matrix $\f{\Phi} = \begin{bmatrix}
		\f{\varphi}_1,&\ldots,& \f{\varphi}_m
	\end{bmatrix}^T$ have independent rows that are distributed as a centered, sub-isotropic, sub-Gaussian vector $\f{\varphi}$, i.e., $\f{\varphi}$ fulfills $\mathbb{E}[\f{\varphi}] = \f{0}$, and there exist $c > 0$ and $\sigma > 0$ such that
	\begin{enumerate}
		\item $\mathbb{E}\left[| \langle \f{\varphi}, \f{z} \rangle |\right] \geq c$ for all $\f{z} \in \mathbb{S}_2^{d-1}$,
		\item $\mathbb{P}\left(| \langle \f{\varphi}, \f{z} \rangle | \geq t \right) \leq 2 \exp(-t^2/(2 \sigma^2))$ for all $\f{z} \in \mathbb{S}_2^{d-1}$.
	\end{enumerate} 
	If the number of measurements $m$ satisfies 
	\begin{equation*}
	m \gtrsim \frac{\sigma^6}{c^6} \frac{\tau^2 \rho}{\gamma^2} s \log\left(\frac{n}{s} \right), 
	\end{equation*}
	then with probability at least $1- \exp(-m \frac{c^4}{64^2 \sigma^4})$, $\f{\Phi} \f{D}$ fulfills the robust NSP of order $s$ with constants $\rho$ and $\tau/\sigma$. In this case, $\ell_1$-synthesis \eqref{l1_synthesis_method} provides stable and robust recovery of both the coefficient vector $\f{x}_0 \in \R^{n}$ and the signal $\f{z}_0 \in \R^{d}$ from $y=\Phi \mathbf{z}_0 +e$ with $\|e\|_2 \le \varepsilon$ such that
 \begin{equation} \label{eq:Casazza:guarantees}
 \begin{split}
 \|\f{\hat{x}} - \f{x}_0\|_2 &\lesssim \sigma_s(\f{x}_0)_1 + \frac{\tau}{\sigma} \varepsilon \quad \text{ and} \\
  \|\f{\hat{z}} - \f{z}_0\|_2 &\lesssim \|\f{D}\|\left( \sigma_s(\f{x}_0)_1+ \frac{\tau}{\sigma}\varepsilon\right).
  \end{split}
 \end{equation}
\end{proposition}
{
\begin{remark} \label{remark:nsps}
In the paper \cite{CasazzaChenLynch20}, the \emph{robust NSP} of {Definition} \ref{def:NSP} is actually called \emph{stable NSP}. We chose our terminology since {Definition} \ref{def:NSP} is very similar to what is called $\ell_2$-robust NSP in \cite[Chapter 4.3]{foucart2013invitation} and \cite{DLR18}. More specifically, it is a property that is shown in the proofs of \cite{DLR18} to imply what is called the $\ell_2$-robust NSP in \cite{DLR18}.

Furthermore, we note that the approach of \cite{CasazzaChenLynch20} to derive the recovery guarantees \eqref{eq:Casazza:guarantees} under the noisy setting leads to a slightly suboptimal guarantees. To see this, compare \cite[Theorem 1.1]{CasazzaChenLynch20} to \cite[Theorem 4.22]{foucart2013invitation} (and to {Proposition \ref{prop:recguarantee}}, for that matter). For this reason our recovery guarantees presented in this paper, such as Theorem \ref{main_result_deterinistic_dictionary}, are sharper than the ones presented in \cite{CasazzaChenLynch20}.

\end{remark}
}

{Proposition} \ref{prop:mainresult:Casazza} is an improvement over the result of \cite{rauhut2008compresseddic} as its requirement on the dictionary $\f{D}$ is weaker than the restricted isometry (RIP) assumption of \cite{rauhut2008compresseddic}. We refer to \cite{CahillGap16,DLR18} for a discussion of the gap between NSP and RIP conditions.

While the robust NSP requirement on the dictionary $\f{D}$ is deterministic in its nature, it is well-known that certifying such a condition for a given matrix $\f{D}$ is in general NP-hard \cite{Tillmann13}. Typically, this issue is addressed by using random constructions, see \cite{foucart2013invitation,ML17,DLR18} for existing results. However, the existing theory does not address the intricacy of the $\ell_1$-synthesis model, as random constructions for both $\f{\Phi}$ and $\f{D}$ might \emph{not} directly imply a robust NSP for the product matrix $\f{\Phi} \f{D}$.

\section{Deterministic Dictionaries} \label{sec:results:deterministic}
In this section, we state our results for $\ell_1$-synthesis in the case of deterministic dictionaries $\f{D} \in \R^{d \times n}$, which generalize {Proposition} \ref{prop:mainresult:Casazza} of \cite{CasazzaChenLynch20} directly to a much larger class of random measurement matrices $\f{\Phi} \in \R^{m \times d}$. 

\subsection{Main Results for Deterministic Dictionaries}
In particular, we assume the following for the measurement matrix $\Phi \in \mathbb{R}^{m\times d}$ in \Cref{main_result_deterinistic_dictionary} below.
\begin{assumption}[$\f{\Phi}$ with independent rows with $\log(\e n/s)$ well-behaved moments] \label{assumption:1}
Let $\Phi \in \R^{m \times d}$ be a matrix with i.i.d. rows distributed as $\varphi/\sqrt{m} \in \mathbb{R}^d$, where $\varphi$ satisfies the conditions
\begin{enumerate}
\label{conditions_varphi}
    \item  $\mathbb{E}[\f{\varphi}]=\f{0}$,
    \item  There exists constants $K,c>0$ such that $\inf_{\f{x}\in \mathbb{S}^{d-1}}\mathbb{P}\big(|\langle \f{\varphi} , \f{x} \rangle| \ge K\big) \ge c$,
    \item  If $n \in \N$ denotes the number of columns of $\f{D}$ and $s \in \N$ arbitrary, there exist constants $\alpha \geq \frac{1}{2}$ and $\lambda > 0$ such that for every vector $\f{a} \in \mathbb{S}^{d-1}$, $\|\langle \f{\varphi},\f{a}\rangle\|_{L^p} \le \lambda p^{\alpha}$ for all $2\le p \lesssim \log \frac{\e n}{s}$.
\end{enumerate}
\end{assumption}

The first condition of \Cref{assumption:1} centers the distribution at $\f{0}$ and is not restrictive. The second condition (2) is often called \emph{small-ball assumption} and is related to an important technical component in our analysis shared by \cite{MendelsonLearning15,ML17,koltchinskii2015bounding}. It is a much weaker assumption than the sub-Gaussian assumption of {Proposition} \ref{prop:mainresult:Casazza} if we assume some mild normalization. In particular, if an isotropic random vector $\varphi$ satisfies $\|\langle \f{\varphi},\f{a}\rangle\|_{L^{2+\varepsilon}} = O(\|\langle \f{\varphi},\f{a}\rangle\|_{L^2})$ for some $\varepsilon >0$, then a simple Paley-Zygmund argument (cf. \cite[Lemma 4.1]{MendelsonLearning15}) shows that the small-ball assumption is satisfied.

Finally, condition (3) of \Cref{assumption:1} allows for distributions with much heavier tail behavior than just sub-Gaussian distributions; it is well-known that if a vector $\f{\varphi}$ is sub-Gaussian and $\f{a} \in \mathbb{R}^d$, then $\|\langle \f{\varphi},\f{a}\rangle\|_{L^p} = O(\sqrt{p}\|\langle \f{\varphi},\f{a}\rangle\|_{L^2})$ (see \cite{vershynin2018high} for a proof), i.e., the moment bound of (3) is fulfilled with $\alpha = \frac{1}{2}$ \emph{for all moments $p \in \N$}. On the other hand, (3) requires only the first $\log(\e n/s)$ moments to follow such behavior. Furthermore, the condition also includes random vectors whose first $\log(\e n/s)$ moments follow the behavior of an exponential (for $\alpha = 1$) random variable or even heavier ones (for $\alpha > 1$). We refer the reader to \cite{DLR18} for more details and examples of similar distributions.

\begin{remark}
In the following, since the parameters $K$, $c$, $\lambda$ and $\alpha$ are constant for any given distribution, they are considered as constant for the purposes of any big-$O$ type notation and thus omitted in such a notation, and similarly, in expressions using $\gtrsim$ and $\lesssim$.
\end{remark}

For measurement matrices $\f{\Phi}$ fulfilling \Cref{assumption:1}, we obtain the following theorem. 

\begin{theorem}
\label{main_result_deterinistic_dictionary}
Suppose that $\f{D} = [\mathbf{d}_1,\ldots,\mathbf{d}_n] \in \R^{d \times n}$ has the robust NSP of order $s$ with constant $0 < \gamma < 1$ and $\tau > 0$ and that its columns satisfy $\max \{ \|\f{d}_i\|_2: i \in [n]\} \leq \rho = \Theta (1)$.

Assume that the measurement matrix $\f{\Phi} \in \mathbb{R}^{m \times d}$ satisfies \Cref{assumption:1} with moment growth parameter {$\alpha \ge \frac{1}{2}$}.	
If the number of measurements satisfies
\begin{equation*}
   { m = \Omega\left(\max\left\{s\log\left(\frac{\e n}{s}\right), \log\left(\frac{\e n}{s}\right)^{\max(2\alpha-1,1)}\right\}\right)},
\end{equation*}
then with {probability at least} $1-e^{-\Omega(m)}$,  $\f{\Phi} \f{D}$ fulfills the robust NSP of order $s$ with some constants $0 < \gamma < 1$ and $\widetilde{\tau}=\Theta(\tau) > 0$.\footnote{$\widetilde{\tau}$ only depends on the distribution parameters of $\varphi$.} Thus, in this case, $\ell_1$-synthesis \cref{l1_synthesis_method} provides stable and robust recovery of any coefficient vector $\f{x}_0 \in \R^{n}$ and corresponding signal $\f{z}_0 =\f{D}\f{x}_0 \in \R^{d}$ from $y=\f{\Phi} \mathbf{z}_0 +\f{e}$ with $\|\f{e}\|_2 \le \varepsilon$ such that
 \[
 \|\f{\hat{x}} - \f{x}_0\|_2 \lesssim \frac{\sigma_s(\f{x}_0)_1}{\sqrt{s}} + \tau \varepsilon \quad \text{ and}
 \]
 \[
  \|\f{\hat{z}} - \f{z}_0\|_2 \lesssim \|\f{D}\|\left( \frac{\sigma_s(\f{x}_0)_1}{\sqrt{s}} + \tau \varepsilon\right).
 \]
\end{theorem}

\begin{remark}
	Compared to the result of \cite{CasazzaChenLynch20} which we presented in {Proposition} \ref{prop:mainresult:Casazza}, we obtain further a {scaling in the best} $s$-term $\ell_1$-approximation error $\sigma_s(\f{x}_0)_1$ improved by $1/\sqrt{s}$, which is in line with the optimal results of standard compressed sensing, cf. \cite[Chapter 4.3]{foucart2013invitation}.	It is well-known (e.g., \cite[Theorem 4.22]{foucart2013invitation}) how to derive guarantees on the $\ell_q$-error for $1 \leq q \leq 2$ based on such a property, we omit them for simplicity. 
\end{remark}

Our proof of \Cref{main_result_deterinistic_dictionary} is based on the ideas of the small-ball method \cite{koltchinskii2015bounding,MendelsonLearning15,ML17,DLR18,CasazzaChenLynch20}, and proceeds by establishing lower bounds {on a} certain empirical process. {As in \cite{MendelsonLearning15,ML17,DLR18}, but crucially, unlike \cite{CasazzaChenLynch20}}, we do \emph{not} rely on Gaussian widths in our argument, as this would rule out using assumptions as general as in \Cref{assumption:1}. {It further relies on a bound on the sum of squared order statistics, cf. \Cref{theorem:mendelson} below.} We refer to \Cref{sec:proofs:deterministic} for the details. 

{
We also point out that, although the small ball method combined with refined probabilistic techniques leads to sharp results under weak assumptions as in Theorem \ref{main_result_deterinistic_dictionary}, it is not applicable to some important examples such as measurement matrices/dictionaries formed by Fourier or wavelet basis. This is because such examples violate the small ball assumption, see \cite{ML17} for more details and related open problems.}
Next, we provide a result that it is almost a corollary of \Cref{main_result_deterinistic_dictionary}, as it is not quite a corollary, but follows from a modification of the proof.
{It is based on the following set of assumptions.}

\begin{assumption}[Measurement matrices with $\log(\e n/s)$ well-behaved moments] \label{assumption:2}
Let $\f{\Phi} \in \mathbb{R}^{m\times d}$ be a measurement matrix with i.i.d. rows distributed as $\varphi/\sqrt{m} \in \mathbb{R}^d$, where $\varphi = \begin{bmatrix}
	\xi_1, & \ldots, & \xi_d
\end{bmatrix}$ is a random vector with independent entries. Assume that for each $i \in [d]$, $\xi_i$ satisfies the conditions
\begin{enumerate}
\label{conditions_varphi_i.i.d_entries}
    \item  $\mathbb{E}[\xi_i]=0$,
    \item  $\mathbb{E}[\xi_i^2] = 1$ (i.e., $\xi_i$ has unit variance),
    \item  If $n \in \N$ denotes the number of columns of $\f{D}$ and $s \in \N$ arbitrary, there exist constants $\alpha \geq \frac{1}{2}$ and $\lambda > 0$ such that $\|\xi_i\|_{L^p} \le \lambda p^{\alpha}$ for all $2\le p \lesssim \log \frac{\e n}{s}$.
\end{enumerate}
\end{assumption}

We remark that the motivation to consider  \Cref{assumption:1} is to make it as minimal as possible, in the sense that it includes a wide range of distributions. On the other hand, \Cref{assumption:2} is easy to check. The price of being simpler is that \Cref{assumption:2} is less general and it excludes natural distributions included in \Cref{assumption:1}. For example, a random vector uniformly distributed in the Euclidean sphere with radius $\sqrt{d}$ does not have independent coordinates and it is not hard to check that \Cref{assumption:1} is satisfied as such random vector has continuous distribution and it is subgaussian, see \cite[Theorem 3.4.6]{vershynin2018high}. Now, we state our theorem adapted to \Cref{assumption:2}.
\begin{theorem}
\label{main_corollary_deterministic_dictionary}
Suppose that $\f{D} = [\mathbf{d}_1,\ldots,\mathbf{d}_n] \in \R^{d \times n}$ has the robust NSP of order $s$ with constant $0 < \gamma < 1$ and $\tau > 0$ and that its columns satisfy $\max \{ \|\f{d}_i\|_2: i \in [n]\} \leq \rho = \Theta (1)$.

Assume that the measurement matrix $\f{\Phi} \in \mathbb{R}^{m \times d}$ satisfies \Cref{assumption:2} with moment growth parameter {$\alpha \ge \frac{1}{2}$}.	
If the number of measurements satisfies
\begin{equation*}
   { m = \Omega\left(\max\left\{s\log\left(\frac{\e n}{s}\right), \log\left(\frac{\e n}{s}\right)^{\max(2\alpha-1,1)}\right\}\right)},
\end{equation*}
then with probability of at least $1-e^{-\Omega(m)}$,  $\f{\Phi} \f{D}$ fulfills the robust NSP of order $s$ with some constants $0 < \gamma < 1$ and $\widetilde{\tau} > 0$. Thus, in this case, $\ell_1$-synthesis \cref{l1_synthesis_method} provides stable and robust recovery of any coefficient vector $\f{x}_0 \in \R^{n}$ and corresponding signal $\f{z}_0 = \f{D} \f{x}_0 \in \R^{d}$ from $y=\f{\Phi} \mathbf{z}_0 +\f{e}$ with $\|\f{e}\|_2 \le \varepsilon$ such that
 \[
 \|\f{\hat{x}} - \f{x}_0\|_2 \lesssim \frac{\sigma_s(\f{x}_0)_1}{\sqrt{s}}  + \tau \varepsilon \quad \text{ and}
 \]
 \[
  \|\f{\hat{z}} - \f{z}_0\|_2 \lesssim \|\f{D}\|\left( \frac{\sigma_s(\f{x}_0)_1}{\sqrt{s}}  + \tau \varepsilon\right).
 \]
\end{theorem}

{The proof of \Cref{main_corollary_deterministic_dictionary} is presented in \Cref{sec:proofs:deterministic}.} Even in the basis case that the dictionary $\f{D} = \f{I}$ is the identity matrix (and thus, $d=n$), \Cref{main_corollary_deterministic_dictionary} slightly improves \cite[Corollary 8]{DLR18} because it requires fewer moments of $\varphi$ and do not require the entries to be identically distributed: indeed, \cite[Corollary 8]{DLR18} requires the bound of condition (3) of \Cref{assumption:2} for the first $\log(n)$ moments, whereas we only require that bound for the first $\log(\e n/s)$ moments.

\begin{example}
Measurement matrices fulfilling \Cref{assumption:2} include many random ensembles with heavy-tailed distributions. For example, let $\xi_1,\ldots,\xi_d$ be independent Student-$t$ random variables of degree $k$. Student-$t$ variables are neither sub-Gaussian nor sub-exponential and even do not have finite $r$-th moment if $r > k$ \cite{Kotz04}. On the other hand, the moments grow as the ones of a sub-Gaussian random variable for $r \in \{1,\ldots k/2\}$.

Therefore, \Cref{main_corollary_deterministic_dictionary} implies that a measurement matrix $\f{\Phi}$ with independent rows whose entries are i.i.d. Student-$t$ random variables of degree $k = \Omega(\log(\e n/s))$ is suitable for robust recovery of $s$-dictionary-sparse signals via $\ell_1$-synthesis. We refer to \cite[Example 9]{DLR18} for more examples for distributions fulfilling similar assumptions.
\end{example}

\subsection{Main technical result: Bound on Expectation of Sum of Squared Order Statistics}
{One crucial step towards proving \Cref{main_result_deterinistic_dictionary,main_corollary_deterministic_dictionary} is a result about the expectation of the sum of squared order statistics. Before we formally state the result, recall that for random variables $z_1,\ldots,z_n$, we call the $i$-th coordinate of the non-increasing rearrangement $z_i^*$ by the $i$-th \emph{order statistic} of the set $(z_1,\ldots,z_n)$ \cite{DavidNagaraja04}. We expect the following theorem to be of independent interest in statistics and in the theory of sparse recovery.}

\begin{theorem}[{Generalization of \cite[Lemma 6.5]{MendelsonLearning15}}] \label{theorem:mendelson}
{Let $s \in \N$. Assume that $z_1,\ldots,z_n$ are centered random variables with  variance $1$ that fulfill, for every $i \in [n]$ and $1\leq p \leq \log(en/s)$, 
\begin{equation} \label{eq:subgaussian:momentbound}
    \|z_i\|_{L_p} \leq \lambda \sqrt{p}.
\end{equation} 
Then, there exists an absolute constant $C > 0$ for which the following holds:}
\begin{equation} \label{eq:theorem:mendelson}
\mathbb{E}\left[\sum_{i=1}^s \left(z_i^{*}\right)^2\right]^{1/2}\leq C \lambda \sqrt{ s\log\left(\frac{\e n}{s}\right)},
\end{equation}
where $z_i^{\ast}$ denotes the $i$-th coordinate of the non-increasing rearrangement of the vector $\f{z} = (z_1,\ldots,z_n)$, or, in other words, the $i$-th order statistic of the vector $\f{z} = (z_1,\ldots,z_n)$.
\end{theorem}

{\Cref{theorem:mendelson} improves on Mendelson's \cite[Lemma 6.5]{MendelsonLearning15}, which is an ingredient for the learning theory results of \cite{MendelsonLearning15} and for the sparse recovery guarantees of \cite{DLR18} for measurement matrices fulfilling weak moment assumptions.}

The improvement with respect to \cite[Lemma 6.5]{MendelsonLearning15} is twofold: First, \Cref{theorem:mendelson} \emph{does not require} independence of the $z_i$, and \emph{neither do we require} the $z_i$ to have identical distributions. Both of these are, on the hand, requirements of \cite[Lemma 6.5]{MendelsonLearning15}. Moreover, in order to establish \cref{eq:theorem:mendelson}, \cite[Lemma 6.5]{MendelsonLearning15} requires the bound for the first $O(\log(n))$ moments, whereas \Cref{theorem:mendelson} only requires such a bound for the first $O(\log(\e n/s))$ moments.

While the focus of this manuscript is the generalization of recovery guarantees for $\ell_1$-synthesis, we would like to point the reader interested in the theory of compressive sensing in general to \Cref{D_satisfies_NSP_whp} below for a generalizing of previous results using \Cref{theorem:mendelson}.

{\begin{remark}
It may be of independent interest to note that while the proof of \Cref{theorem:mendelson} uses an assumption about \emph{subgaussian moment} growth for the first few moments in \eqref{eq:subgaussian:momentbound}, it could be adapted to moment bounds of the type $p^{\alpha}$ for some $\alpha > \frac{1}{2}$, if needed.
\end{remark}
}

We present the proof of \Cref{theorem:mendelson} in \Cref{sec:proofs}. We end this section by observing that it is not clear whether our results can be relaxed to allow the entries of the vector $\varphi$ to be arbitrarily dependent.

\section{Random Dictionaries} \label{sec:results:random}
The results we presented in \Cref{sec:results:deterministic} are all based on the dictionary $\f{D}$ satisfying a null space property (NSP). {Proposition} \ref{prop:ChenWangWang} further suggests that the spark of $\f{D}$ plays an important role in the connection between a NSP of the measurement-dictionary product matrix $\f{\Phi} \f{D}$ and recovery guarantees for $\ell_1$-synthesis. However, it is in general NP-hard to verify whether $\f{D}$ is of full spark or whether $\f{D}$ fulfills a null space property \cite{Tillmann13}. 

This problem can typically be avoided by considering \emph{random} matrices, similar to what has been presented in \Cref{sec:results:deterministic} for the measurement matrices. In this way, the desired property can be established at least with high probability. In this section, we therefore consider different setups where \emph{both the dictionary $\f{D}$ and the measurement matrix $\f{\Phi}$} are sampled at random.

In particular, we suppose that the dictionary $\mathbf{D} = \frac{1}{\sqrt{d}}[\mathbf{\psi}_1, \ldots, \mathbf{\psi}_d]^{T} \in \mathbb{R}^{d \times n}$ is now generated at random independently of $\f{\Phi}$ such that each $\mathbf{\psi}_i$ is an independent copy of a random vector $\mathbf{\psi}\in \mathbb{R}^d$. The normalization by $\sqrt{d}$ is not really necessary but it maintains the results in the same line of the previous results. For example, even for a standard Gaussian matrix $\f{G} \in \mathbb{R}^{d\times n}$, the $\ell_2$-norm of the columns are of order $\sqrt{d}$, so the maximum $\rho$ of the column $\ell_2$-norms fulfills $\rho =\Omega(\sqrt{d}) \gg 1$. Without normalization, the scaling $\sqrt{d}$ would appear in the constant $\tau$ of the robust NSP.

\subsection{Properties of Random Dictionaries} \label{sec:prop:rand:dict}
In this section, we establish the robust null space property (robust NSP) of random dictionaries $\f{D}$ whose rows are independent and fulfill weak moment assumptions and a small-ball assumption.

\begin{assumption}[$\f{D}$ with independent rows with $\log(\e n/s)$ well-behaved moments fulfilling {small}-ball assumption] \label{assumption:3}

Let $s \in \N$, $0 < \gamma < 1$ and $S_{\gamma} := \left\{\mathbf{x} \in \R^{n}: \|\mathbf{x}_T\|_2 >\frac{\gamma}{\sqrt{s}}\|\mathbf{x}_{T^c}\|_1 \ \text{for some} \ |T|\le s \right\} \cap \mathbb{S}^{n-1}$, let $\f{D} \in \R^{d \times n}$ be a matrix with i.i.d. rows distributed as $\psi/\sqrt{d} \in \mathbb{R}^n$, where $\psi$ satisfies the conditions
\begin{enumerate}
    \item  $\mathbb{E}[\f{\psi}]=\f{0}$,
    \item  There exists constants $K,c>0$ such that $\inf_{\f{x}\in S_{\gamma}}\mathbb{P}\big(|\langle \f{\psi} , \f{x} \rangle| \ge K\big) \ge c$,
    \item The entries $\xi_i$  of the vector $\psi=(\xi_1,\ldots,\xi_n)$ (not necessarily independent) satisfy $\|\xi_i\|_{L^p} \le \lambda p^{\beta}$ for some $\lambda>0$, $\beta \ge 1/2$, for all $2\le p \lesssim \log \frac{\e n}{s}$. \end{enumerate}
\end{assumption}

The main theorem related to the \Cref{assumption:3} can be stated as follows.

\begin{theorem}
\label{D_satisfies_NSP_whp}
Suppose $\mathbf{D} \in \mathbb{R}^{d\times n}$ fulfills \Cref{assumption:3} {for some $\beta \ge \frac{1}{2}$}. If 
\begin{equation*}
{    d=\Omega\left(\max\left\{s\log\Big(\frac{\e n}{s}\Big), \log\Big(\frac{\e n}{s}\Big)^{\max(2\beta-1,1)}\right\}\right)},
\end{equation*}
then, with probability at least $1-e^{-\Omega(d)}$, $\f{D}$ fulfills the robust NSP of order $s$ with constants $0 < \gamma < 1$ and $\tau > 0$.\footnote{$\tau$ is an absolute constant that only depends on the distributional parameters of $\psi$.}
\end{theorem}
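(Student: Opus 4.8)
The plan is to run the \emph{small-ball method} \cite{koltchinskii2015bounding,MendelsonLearning15,ML17,DLR18}, feeding in the sharp bound furnished by \Cref{Proposition_Main_Result_basis_case}. Since $\f{D}$ has rows $\psi_i^T/\sqrt{d}$, establishing the robust NSP of \Cref{def:NSP} amounts to showing $\inf_{\f{v}\in S_\gamma}\frac1d\sum_{i=1}^d\langle\psi_i,\f{v}\rangle^2\ge 1/\tau^2$ with probability $1-e^{-\Omega(d)}$. Set $u:=A^*/2$ and let $\varphi\colon\R\to[0,1]$ be the piecewise-linear function that vanishes on $[-u,u]$, equals $1$ outside $[-2u,2u]$, and is $(1/u)$-Lipschitz; then $\langle\psi_i,\f{v}\rangle^2\ge u^2\mathds{1}\{|\langle\psi_i,\f{v}\rangle|\ge u\}\ge u^2\varphi(\langle\psi_i,\f{v}\rangle)$, so
\[
\inf_{\f{v}\in S_\gamma}\frac1d\sum_{i=1}^d\langle\psi_i,\f{v}\rangle^2\ \ge\ u^2\Big(\inf_{\f{v}\in S_\gamma}\mathbb{E}\,\varphi(\langle\psi,\f{v}\rangle)-\mathcal{E}\Big),\qquad \mathcal{E}:=\sup_{\f{v}\in S_\gamma}\Big|\frac1d\sum_{i=1}^d\big(\varphi(\langle\psi_i,\f{v}\rangle)-\mathbb{E}\,\varphi(\langle\psi,\f{v}\rangle)\big)\Big|.
\]
By the small-ball condition (2) of \Cref{assumption:3} together with $2u=A^*$, $\mathbb{E}\,\varphi(\langle\psi,\f{v}\rangle)\ge\mathbb{P}(|\langle\psi,\f{v}\rangle|\ge A^*)\ge c$ for every $\f{v}\in S_\gamma$, so everything reduces to controlling $\mathcal{E}$.

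For the expectation of $\mathcal{E}$, I would symmetrize and then apply the Rademacher contraction principle: as $\varphi$ is $(1/u)$-Lipschitz with $\varphi(0)=0$,
\[
\mathbb{E}\,\mathcal{E}\ \lesssim\ \frac1{u\,d}\,\mathbb{E}\sup_{\f{v}\in S_\gamma}\Big|\Big\langle\sum_{i=1}^d\varepsilon_i\psi_i,\f{v}\Big\rangle\Big|\ =\ \frac1{u\sqrt d}\,\mathbb{E}\sup_{\f{v}\in S_\gamma}\big|\langle V_\psi,\f{v}\rangle\big|,
\]
with $V_\psi=\frac1{\sqrt d}\sum_{i=1}^d\varepsilon_i\psi_i$ as in \Cref{Proposition_Main_Result_basis_case}. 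Every $\f{v}\in S_\gamma$ obeys $\|\f{v}\|_2=1$ and, splitting over the witnessing index set $T$ and using $\|\f{v}_T\|_1\le\sqrt s\|\f{v}_T\|_2$, also $\|\f{v}\|_1\le(1+\gamma^{-1})\sqrt s$. Since $\{\f{w}:\|\f{w}\|_2\le1,\ \|\f{w}\|_1\le R\}$ is contained in twice the convex hull of the $\lceil R^2\rceil$-sparse unit vectors, this gives $\sup_{\f{v}\in S_\gamma}|\langle V_\psi,\f{v}\rangle|\le 2\big(\sum_{i=1}^{k}((V_\psi)_i^*)^2\big)^{1/2}$ with $k=\lceil(1+\gamma^{-1})^2 s\rceil$. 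Since $d\gtrsim\log(n/s)^{\max(2\beta-1,1)}$ is part of the hypothesis, \Cref{Proposition_Main_Result_basis_case} applies (with the constant-factor inflation of $s$ absorbed into the implicit constants) and yields $\mathbb{E}\big[\sum_{i=1}^{k}((V_\psi)_i^*)^2\big]^{1/2}\lesssim\sqrt{s\log(n/s)}$, hence $\mathbb{E}\,\mathcal{E}\lesssim u^{-1}\sqrt{s\log(n/s)/d}$.

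To upgrade to a high-probability statement, note that $\mathcal{E}$ is a function of the independent rows $\psi_1,\dots,\psi_d$ with bounded differences $1/d$ (because $0\le\varphi\le1$), so McDiarmid's inequality gives $\mathbb{P}(\mathcal{E}\ge\mathbb{E}\,\mathcal{E}+t)\le e^{-2dt^2}$. Choosing the implicit constant in $d\gtrsim s\log(n/s)$ large enough makes $\mathbb{E}\,\mathcal{E}\le c/4$; taking $t=c/4$ then gives $\mathcal{E}\le c/2$ with probability $1-e^{-\Omega(d)}$, and on that event $\inf_{\f{v}\in S_\gamma}\frac1d\sum_{i=1}^d\langle\psi_i,\f{v}\rangle^2\ge u^2 c/2$, i.e. the robust NSP of order $s$ with $\tau=\sqrt{8/((A^*)^2 c)}$, an absolute constant depending only on $A^*$ and $c$. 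A harmless shrinking of $\gamma$ reconciles the open set $S_\gamma$ of \Cref{assumption:3} with the closed one of \Cref{def:NSP}.

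The only genuinely hard part is already encapsulated in \Cref{Proposition_Main_Result_basis_case} and, behind it, \Cref{lemma:mendelson}: obtaining the order-optimal bound $\mathbb{E}\big[\sum_{i=1}^{k}((V_\psi)_i^*)^2\big]^{1/2}\lesssim\sqrt{s\log(n/s)}$ under merely $\log(n/s)$ well-behaved moments rather than sub-Gaussian tails. Given that input, the remaining work is the standard small-ball assembly; the only mildly delicate bookkeeping is the geometric reduction of $\sup_{\f{v}\in S_\gamma}|\langle V_\psi,\f{v}\rangle|$ to the largest coordinates of $V_\psi$, and tracking the dependence of all constants on $\gamma$, $A^*$ and $c$ so that the final $\tau$ is distribution-dependent only.
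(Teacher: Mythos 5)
Your proposal is correct and follows essentially the same route as the paper: reduce to lower-bounding $\inf_{\f{v}\in S_\gamma}\|\f{D}\f{v}\|_2$ via the small-ball method, control the complexity term by \Cref{Proposition_Main_Result_basis_case}, and use the small-ball condition of \Cref{assumption:3} for the marginal tail lower bound. The only cosmetic differences are that you re-derive the small-ball concentration inequality by hand (Lipschitz minorant, symmetrization and contraction, bounded-differences concentration) where the paper simply invokes \Cref{small_ball_method}, and you use a sparse-convex-hull containment of $S_\gamma$ that is equivalent up to constants to the paper's use of \Cref{Lemma2_gap_RIP_NSP}.
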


It is worth putting \Cref{D_satisfies_NSP_whp} into context of existing results for standard compressed sensing as introduced in \Cref{sec:sparseintro}. \Cref{D_satisfies_NSP_whp} provides an improvement of \cite[Corollary 8]{DLR18} because it allows dependency among entries of $\psi$ and a bound on fewer moments of $\psi$, i.e., only $\log(\e n/s)$ instead of $\log(n)$. Strictly speaking, our result does not improve \cite[Theorem A]{ML17} because \Cref{D_satisfies_NSP_whp} assumes a slightly stronger condition: The small-ball assumption (2) of \Cref{assumption:3} contains an infimum over the set $S_{\gamma}$ which is smaller than the $\ell_2$-unit ball of Assumption 1, but, on the other hand, $S_{\gamma}$ is taken on a \emph{larger} set than in \cite[Theorem A]{ML17}, where the assumption is made with respect to a set of $s$-sparse vectors. In particular, the authors in \cite{ML17} assume the small-ball assumption on the set $\Sigma_s$ of $s$-sparse vectors in the sphere while our condition is satisfied, for example, if the small-ball assumption holds in the convex hull of $\Sigma_s$ intersected with the $\ell_2$-sphere. On the other hand, \Cref{D_satisfies_NSP_whp}, compared to \cite[Theorem A]{ML17}, requires fewer moments on $\psi$ and decreases the factor $4\beta -1$ to $2\beta-1$ together with a significant improvement in the failure probability.

\begin{remark}
One might wonder how close \Cref{assumption:3} comes to a \emph{minimal} assumption {for establishing the}  null space property of order $s$ as in \Cref{D_satisfies_NSP_whp}. \cite[Theorem C]{ML17} establishes that in fact, a random vector with independent entries whose first $\log(n)/\log(\log(n))$ moments are sub-Gaussian, with a probability of at least $1/2$, does \emph{not} lead to an exact reconstruction property (which is even slightly weaker than our robust NSP). In this sense, our sufficient condition involving the first $\log(\e n/s)$ moments closes parts of this gap. {On the other hand, the typical scenario in compressed sensing is such that $s \ll n$, so that our improvement from $\log(n)$ to $\log(\e n/s)$ might not be numerically relevant in practical scenarios.}
\end{remark}

We recall that {Proposition} \ref{prop:ChenWangWang} suggests that the connection between a null space property $\f{\Phi} \f{D}$ and recovery guarantees of $\ell_1$-synthesis is also related to the spark of $\mathbf{D}$. In order for $\f{D}$ to have full spark, we note that is sufficient to choose $\psi$ to fulfill any joint law such that, for every collection of $d$ entries, the joint law of such collection is absolutely continuous with respect to the Lebesgue measure in $\mathbb{R}^d$ \cite{Blumensath09,ChenWangWang14}.
We note that condition on $\psi$ of \Cref{assumption:3} can be easily obtained with independent entries, and the fact that $\psi$ is drawn from a continuous distribution does not hurt the small-ball assumption (2) of \Cref{assumption:3}. For example, every $\psi$ drawn from a continuous distribution with bounded density automatically satisfies the small-ball assumption \cite{DLR18}.

\subsection{Main Results for Random Dictionaries} \label{sec:mainresults:randomdict}
In this section, we present our main results when the dictionary $\mathbf{D} \in \mathbb{R}^{d\times n}$ is generated at random. Our first model, addressed in \Cref{main_result_first_random_dictionary}, consists in $\mathbf{D}$ generated by $\psi$ as in \Cref{assumption:3} together with the measurement matrix $\f{\Phi} \in \mathbb{R}^{m \times d}$ of the previous chapter. The second model, addressed in \Cref{main_result_second_random_dictionary}, assumes stronger conditions for $\mathbf{D}$ but it will considerably relax the assumptions on the rows $\varphi$ of the measurement matrix $\f{\Phi}$.

We provide our first main result. In a nutshell, it establishes sparse recovery via the $\ell_1$-synthesis method \cref{l1_synthesis_method} with high probability. To the best of our knowledge, it is the first result that considers both dictionary and measurement matrix with heavy tails.
\begin{theorem}
\label{main_result_first_random_dictionary}
Suppose $\f{\Phi} \in \mathbb{R}^{m \times d}$ is a measurement matrix satisfying \Cref{assumption:1} {for some $\alpha \ge \frac{1}{2}$}. Generate $\mathbf{D} = \frac{1}{\sqrt{d}} [\psi_1,\ldots,\mathbf{\psi}_n]^{T} \in \mathbb{R}^{d\times n}$ independently, such that it fulfills \Cref{assumption:3} {for some $\beta \ge \frac{1}{2}$}. 
If the number of rows of $\mathbf{D}$ satisfies
\begin{equation*}
   { d = \Omega \left(\max \left\{s\log(\frac{\e n}{s}), \log(\frac{\e n}{s})^{\max(2\beta-1,1)}\right\}\right)}
\end{equation*}
and the number of rows of $\f{\Phi}$ satisfies
\begin{equation*}
   { m = \Omega \left(\max \left\{s\log(\frac{\e n}{s}), \log(\frac{\e n}{s})^{\max(2\alpha-1,1)}\right\}\right)},
\end{equation*}
then, with probability at least { $1-e^{-\Omega(d)} - e^{-\Omega(m)}$}, $\mathbf{D}$ satisfies {robust} NSP with constants $\gamma$ and some $\widetilde{\tau}$ and furthermore, the $\ell_1$-synthesis method \cref{l1_synthesis_method} provides stable and robust reconstruction of any coefficient vector $\f{x}_0 \in \R^{n}$ and signal $\f{z}_0 \in \R^{d}$ from $y=\f{\Phi} \mathbf{z}_0 +\f{e}$ with $\|\f{e}\|_2 \le \varepsilon$ such that
\begin{equation*}
    \|\hat{\mathbf{x}} - \mathbf{x}_0\|_2 \lesssim \frac{\sigma_s(\f{x}_0)_1}{\sqrt{s}}  + \widetilde{\tau} \epsilon 
\end{equation*}
\begin{equation*}
    \|\hat{\mathbf{z}} - \mathbf{z}_0\|_2 \lesssim \|\mathbf{D}\|\left(\frac{\sigma_s(\f{x}_0)_1}{\sqrt{s}}  + \widetilde{\tau} \epsilon \right).
\end{equation*}
In particular, it holds that $\mathbb{E}\|\hat{\mathbf{z}} - \mathbf{z}_0\|_2 \lesssim \mathbb{E}[\|\mathbf{D}\|](\sigma_{s}(\mathbf{x}_0)/\sqrt{s} + \widetilde{\tau} \varepsilon)$.
\end{theorem}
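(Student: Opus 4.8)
The plan is to decouple the two sources of randomness. First I would show that the random dictionary $\f{D}$ is, on its own, ``good'' --- in the sense that it satisfies a robust null space property and has uniformly bounded columns --- with probability at least $1-e^{-O(d)}$; then, conditioning on such a realization of $\f{D}$ and using the independence of $\f{D}$ and $\f{\Phi}$, I would invoke the deterministic-dictionary result \Cref{main_result_deterinistic_dictionary} for the measurement matrix $\f{\Phi}$, whose hypotheses \Cref{assumption:1} are now verified. An alternative would be to re-run the small-ball empirical-process argument directly with the joint randomness of $\f{\Phi}$ and $\f{D}$, but the conditioning route lets me quote the earlier theorems essentially verbatim.

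\emph{Step 1 (the dictionary is good).} Since $\f{D}$ fulfills \Cref{assumption:3} and $d=\Omega(\max(s\log(n/s),\log(n/s)^{\max(2\beta-1,1)}))$, \Cref{D_satisfies_NSP_whp} gives that, with probability at least $1-e^{-O(d)}$, the event $\mathcal{E}_1:=\{\f{D}\text{ has robust NSP of order }s\text{ with constants }\gamma,\tau\}$ holds, where $\tau$ is a deterministic constant depending only on the law of $\psi$; on $\mathcal{E}_1$, $\f{D}$ in particular satisfies the weaker stable NSP (cf. \Cref{remark:nsps}), which already yields the first assertion of the theorem. In parallel I would control the column norms: $\|\f{d}_j\|_2^2=\tfrac1d\sum_{i=1}^d(\psi_i)_j^2$ is an average of $d$ i.i.d.\ nonnegative random variables whose first $\Theta(\log(n/s))$ moments are bounded (apply condition (3) of \Cref{assumption:3} with $p=2,4,\ldots$). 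A moment/Markov estimate, combined with a union bound over the $n$ columns and the hypothesis $d\gtrsim\log(n/s)^{\max(2\beta-1,1)}$, shows that the event $\mathcal{E}_2:=\{\max_{j\in[n]}\|\f{d}_j\|_2\le\rho\}$ holds with probability at least $1-e^{-O(d)}$ for an absolute constant $\rho=\Theta(1)$; this supplies precisely the column-normalization hypothesis needed in \Cref{main_result_deterinistic_dictionary}, and it is where the weak-moment toolbox (\Cref{lemma:mendelson} and the accompanying Rosenthal/Khintchine-type estimates) enters.

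\emph{Step 2 (condition, apply the deterministic theorem, assemble).} Fix any realization $D$ of $\f{D}$ with $D\in\mathcal{E}_1\cap\mathcal{E}_2$; then $D$ is a deterministic matrix satisfying all hypotheses of \Cref{main_result_deterinistic_dictionary}. Because $\f{\Phi}$ is independent of $\f{D}$, the conditional law of $\f{\Phi}$ given $\{\f{D}=D\}$ still satisfies \Cref{assumption:1}, so \Cref{main_result_deterinistic_dictionary} applies with this $D$: as $m=\Omega(\max(s\log(n/s),\log(n/s)^{\max(2\alpha-1,1)}))$, with conditional probability at least $1-e^{-\Omega(m)}$ the product $\f{\Phi}D$ has robust NSP of order $s$ with constants $\gamma'\in(0,1)$ and deterministic $\widetilde{\tau}=\Theta(\tau)$, and $\ell_1$-synthesis \cref{l1_synthesis_method} recovers $\f{x}_0$ and $\f{z}_0=D\f{x}_0$ with the stated bounds (the signal bound via $\|\hat{\f{z}}-\f{z}_0\|_2\le\|D\|\,\|\hat{\f{x}}-\f{x}_0\|_2$). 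This conditional estimate is uniform over $D\in\mathcal{E}_1\cap\mathcal{E}_2$, so by the tower property the event carrying all claimed high-probability bounds has probability at least $1-e^{-O(d)}-e^{-\Omega(m)}$. For the in-expectation statement I would split on this good event $\mathcal{G}$: on $\mathcal{G}$ the bound $\|\hat{\f{z}}-\f{z}_0\|_2\lesssim\|\f{D}\|(\sigma_s(\f{x}_0)_1/\sqrt{s}+\widetilde{\tau}\varepsilon)$ with deterministic $\widetilde{\tau}$ contributes at most a constant times $\mathbb{E}[\|\f{D}\|](\sigma_s(\f{x}_0)_1/\sqrt{s}+\widetilde{\tau}\varepsilon)$; on $\mathcal{G}^c$ I use that $\f{x}_0$ is feasible for \cref{l1_synthesis_method} (since $\|\f{\Phi}\f{D}\f{x}_0-\f{y}\|_2=\|\f{e}\|_2\le\varepsilon$ because $\f{z}_0=\f{D}\f{x}_0$), hence $\|\hat{\f{x}}\|_1\le\|\f{x}_0\|_1$ and $\|\hat{\f{z}}-\f{z}_0\|_2\le 2\|\f{D}\|\,\|\f{x}_0\|_1$, so by Cauchy--Schwarz and the crude bound $\mathbb{E}\|\f{D}\|^2\le\mathbb{E}\|\f{D}\|_F^2=\Theta(n)$ the $\mathcal{G}^c$-contribution is at most a negligible, exponentially small additive term; this gives the asserted in-expectation bound.

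The main obstacle is the column-norm control in Step 1: establishing $\max_j\|\f{d}_j\|_2=\Theta(1)$ with failure probability as small as $e^{-O(d)}$ when the entries of $\psi$ may have only logarithmically many finite moments. The moment-method union bound over the $n$ columns must be balanced precisely against $d\gtrsim\log(n/s)^{\max(2\beta-1,1)}$ --- this is the step that actually forces that condition on $d$ --- and it relies on the weak-moment estimates built around \Cref{lemma:mendelson}. A secondary, routine point is making the conditioning rigorous: one must invoke the independence of $\f{D}$ and $\f{\Phi}$ so that passing to a good realization of $\f{D}$ leaves the distribution of $\f{\Phi}$ intact, allowing \Cref{main_result_deterinistic_dictionary} to be quoted rather than re-derived.
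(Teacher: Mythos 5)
Your decoupling approach via conditioning on a good realization of $\f{D}$ is natural, and Step~1's NSP claim (via \Cref{D_satisfies_NSP_whp}) is correct, but the column-norm control in Step~1 — the very point you flag as the ``main obstacle'' — is a genuine gap, not merely a technicality. Under \Cref{assumption:3} the entries of $\psi$ may have only $\Theta(\log(n/s))$ finite moments, and in that regime the event $\mathcal{E}_2 = \{\max_j\|\f{d}_j\|_2 \le \rho\}$ simply cannot be shown to hold with probability $1-e^{-O(d)}$: for a distribution with heavy polynomial tails (say, with exactly $q \asymp \log(n/s)$ moments), already $\mathbb{P}\bigl(\tfrac1d\sum_{i=1}^d(\psi_i)_j^2 > 2\mathbb{E}[(\psi_1)_j^2]\bigr) \gtrsim \mathbb{P}\bigl((\psi_1)_j^2 > c d\bigr) \gtrsim d^{-q}$, a polynomial rather than exponential failure probability. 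Since $d$ can be as large as $s\log(n/s) \gg \log(n/s)$, no moment/Markov/Rosenthal estimate with only $\log(n/s)$ moments can reach $e^{-\Omega(d)}$, so the conditioning route as you set it up does not close.

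The paper circumvents this entirely: it never establishes a high-probability uniform bound on the column norms. Instead, after conditioning on the NSP event $E_1$, it runs the small-ball argument for $\f{\Phi}\f{D}$ and folds the randomness of $\f{D}$ directly into the expectation defining the mean empirical width. The analogue of \Cref{Proposition_Main_Result} is proved with $\rho := \max_{j\in[n]}\mathbb{E}\|\f{d}_j\|_2^2$ — crucially an \emph{expected} squared column norm, which is trivially $O(1)$ since each entry of $\psi$ has unit-order variance — and the bound
\[
\mathbb{E}\Bigl[\textstyle\sum_{i=1}^s\bigl((\mathbf{D}^{T}V)_i^{*}\bigr)^2\Bigr]^{1/2}\lesssim\sqrt{\rho\,s\log(n/s)}
\]
is obtained by conditioning on $\f{D}$ inside the proof of \Cref{Proposition_Main_Result} and taking the outer expectation over $\f{D}$ afterwards. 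This is exactly the ``re-run the small-ball argument with joint randomness'' alternative you mention and then set aside; it is the route that actually works here, precisely because it only needs control in expectation, not in probability. Your Step~2 conditioning mechanics, the failure-probability bookkeeping via the tower property, and the treatment of the in-expectation claim on $\mathcal{G}^c$ (feasibility of $\f{x}_0$ plus a crude Frobenius bound) are all fine and in fact more explicit than the paper on those points — but the argument as a whole stands or falls with the column-norm concentration, and that piece cannot be repaired within the stated assumptions.
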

The proof follows easily from the tools developed to prove the results presented in \Cref{sec:results:deterministic} and \Cref{sec:prop:rand:dict} and is detailed in \Cref{sec:proofs:random:main}.

Next, we consider the same problem as before, but with different assumptions on our dictionary $\mathbf{D}\in \mathbb{R}^{d \times n}$ and the random vector $\varphi \in \mathbb{R}^{d}$ that determines the distribution of the measurement matrix $\f{\Phi}$. This result explores the trade-off between the assumptions on the dictionary $\mathbf{D}$ and the measurement matrix $\f{\Phi}$. Namely, we are now interested in random dictionaries $\mathbf{D}$ with independent entries drawn from a distribution with a few sub-Gaussian moments, and independently sampled measurements $\varphi$ whose distribution possesses only a finite second moment.
\begin{assumption}[$\f{\Phi}$ with i.i.d. entries with finite variance] \label{assumption:4}
Let $\f{\Phi} \in \R^{m \times d}$ be a matrix with i.i.d. rows distributed as $\varphi/\sqrt{m}$, where $\varphi$ satisfies
\begin{enumerate}
    \item  $\mathbb{E}[\f{\varphi}]=\f{0}$,
    \item  There exists constants $K,c>0$ such that $\inf_{\f{x}\in \mathbb{S}^{d-1}}\mathbb{P}(|\langle \f{\varphi} , \f{x} \rangle| \ge K) \ge c$,
    \item The entries $\xi_i$ of $\varphi$ have bounded variance, i.e., there exists $\lambda >0$ that $\|\xi_i\|_{L_2} \leq \lambda$ for all $i \in [d]$.
\end{enumerate}
\end{assumption}
Notice that the assumptions on $\f{\Phi}$ are very mild. Furthermore, a random dictionary $\f{D}$ with i.i.d. entries with $\log(n)$ sub-Gaussian moments satisfies the assumptions of \Cref{D_satisfies_NSP_whp}, for $d$ large enough---the construction corresponds to the case in which the entries of $\psi$ are i.i.d. random variables with moment growth parameter $\beta = 1/2$. In particular, we can show the following.

\begin{theorem}
\label{main_result_second_random_dictionary}
Suppose $\f{\Phi} \in \mathbb{R}^{m \times d}$ is a random matrix satisfying \Cref{assumption:4}, and suppose that $\f{D} = (\mathbf{D})_{ij} = (d^{-1/2}\mathbf{d})_{ij}$ is a dictionary with i.i.d centered entries, such that the $\mathbf{d}_{ij}$ are drawn independently from $\f{\Phi}$, and their distribution's first $\log(n)$ moments are sub-Gaussian, i.e, $\|\mathbf{d}_{ij}\|_{L^p}= O(\sqrt{p})$ for $p\le \log(n)$, for all $i \in [d]$ and $j \in [n]$.

Then, the same conclusions as in \Cref{main_result_first_random_dictionary} hold.
\end{theorem}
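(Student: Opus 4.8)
\emph{Setting up via the dictionary.} The plan is to build on the tools already assembled, adding one genuinely new complexity estimate. Since $\f{D}$ has i.i.d.\ centered entries with $\log(n)$ sub-Gaussian moments, a row of $\sqrt{d}\,\f{D}$ satisfies condition (3) of \Cref{assumption:3} with $\beta = 1/2$, and since $\|\f{d}_{ij}\|_{L^4} = O(\|\f{d}_{ij}\|_{L^2})$, Khintchine's inequality for the i.i.d.\ entries plus a Paley--Zygmund argument give the small-ball condition (2) of \Cref{assumption:3} on $S_\gamma$. Thus for $d = \Omega(s\log(n/s))$, \Cref{D_satisfies_NSP_whp} yields an event $\mathcal{E}_{\f{D}}$ of probability $\ge 1 - e^{-\Omega(d)}$ on which $\f{D}$ has the robust NSP of order $s$ with constants $\gamma$ and an absolute $\tau$, so in particular $\|\f{D}\f{v}\|_2 \ge 1/\tau$ for all $\f{v} \in S_\gamma$; after intersecting with the $1 - e^{-\Omega(d)}$-probable event that the $n$ column norms concentrate (a $p \asymp \log n$ moment bound and a union bound) we may also assume $\max_j \|\f{d}_j\|_2 \le \rho = \Theta(1)$ there. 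This already delivers the ``$\f{D}$ satisfies stable NSP'' part of the conclusion via \Cref{remark:nsps}.

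\emph{Reduction to a complexity estimate.} Conditioning on $\f{D} \in \mathcal{E}_{\f{D}}$, I would rerun the proof of \Cref{main_result_deterinistic_dictionary} with $\f{D}$ as the deterministic dictionary and $\f{\Phi}\f{D}$ the measurement matrix (rows $\f{D}^T\f{\varphi}_i/\sqrt{m}$). That proof reduces the robust NSP of $\f{\Phi}\f{D}$, via the small-ball method (symmetrization, a Lipschitz surrogate for the indicator, contraction, and McDiarmid for the bounded indicator empirical process --- the source of the $e^{-\Omega(m)}$ failure probability), to two inputs. First, a small-ball lower bound on $S_\gamma$: for $\f{v} \in S_\gamma$, condition (2) of \Cref{assumption:4} applied to $\f{D}\f{v}/\|\f{D}\f{v}\|_2$ together with $\|\f{D}\f{v}\|_2 \ge 1/\tau$ gives $\mathbb{P}(|\langle \f{\varphi}, \f{D}\f{v}\rangle| \ge A^*/\tau) \ge c$ --- this already holds under \Cref{assumption:4} and is where the robust NSP of $\f{D}$ is used in an essential way. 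Second, the estimate of \Cref{Proposition_Main_Result}: $\mathbb{E}_{\varepsilon,\f{\varphi}}\big[\sum_{i=1}^s((\f{D}^T V)_i^*)^2 \,\big|\, \f{D}\big]^{1/2} \lesssim \rho\sqrt{s\log(n/s)}$ with $V = m^{-1/2}\sum_{i=1}^m \varepsilon_i \f{\varphi}_i$. As $\f{\varphi}$ obeys only \Cref{assumption:4}, \Cref{Proposition_Main_Result} is not directly applicable, and re-proving it is the crux.

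\emph{The new complexity estimate.} To re-establish it I would swap the roles of $\f{D}$ and $\f{\Phi}$. Condition on $V$, which is a function of $(\varepsilon,\f{\varphi})$ and hence independent of $\f{D}$. Since the columns of $\f{D}$ are independent, the coordinates $(\f{D}^TV)_j = \langle \f{d}_j, V\rangle = d^{-1/2}\sum_{k=1}^d \f{d}_{kj}V_k$ are independent, and the weak Khintchine inequality \Cref{weak_khintchine} (applicable because the $\f{d}_{kj}$ are $\log(n)$ sub-Gaussian) gives $\|(\f{D}^TV)_j\|_{L^p} \lesssim \sqrt{p}\,\|V\|_2/\sqrt{d}$ for $2 \le p \lesssim \log(n/s)$. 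Applying \Cref{lemma:mendelson} conditionally on $V$, with moment-growth parameter $1/2$ and scale $\|V\|_2/\sqrt{d}$, yields $\mathbb{E}_{\f{D}}\big[\sum_{i\le s}((\f{D}^TV)_i^*)^2 \,\big|\, V\big] \lesssim (\|V\|_2^2/d)\,s\log(n/s)$; integrating over $V$ and using $\mathbb{E}\|V\|_2^2 = \mathbb{E}\|\f{\varphi}\|_2^2 \le \lambda^2 d$ (the Rademacher cross-terms vanish since the $\f{\varphi}_i$ are independent and centered) gives $\mathbb{E}_{\f{D},\varepsilon,\f{\varphi}}\big[\sum_{i\le s}((\f{D}^TV)_i^*)^2\big] \lesssim s\log(n/s)$ in \emph{joint} expectation. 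The main obstacle I anticipate is upgrading this to the conditional statement needed in the previous paragraph --- that $\mathbb{E}_{\varepsilon,\f{\varphi}}\big[\sum_{i\le s}((\f{D}^TV)_i^*)^2 \,\big|\, \f{D}\big] \lesssim s\log(n/s)$ holds off a $\f{D}$-event of probability $e^{-\Omega(d)}$, whereas Markov alone only gives this off a constant-probability event. I would handle it by a concentration bound for this quantity as a function of the entries of $\f{D}$ (exploiting that the averaging over $V$ makes it far more stable than the $O(\sqrt{d})$ Frobenius-Lipschitz constant of a single-$V$ section would suggest, and that $\f{D}$ has enough moments for the concentration tools behind \Cref{D_satisfies_NSP_whp}), and by taking the implicit constant in $m = \Omega(s\log(n/s))$ large enough to absorb any residual multiplicative slack through the $1/\sqrt{m}$ factor in the small-ball bound; alternatively, one runs the small-ball method jointly over $(\f{\Phi},\f{D})$ and feeds in the joint bound directly.

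\emph{Conclusion.} On $\mathcal{E}_{\f{D}}$ intersected with this complexity event, the small-ball method gives $\inf_{\f{v}\in S_\gamma}\|\f{\Phi}\f{D}\f{v}\|_2 \ge 1/\widetilde{\tau}$ with probability $1 - e^{-\Omega(m)}$ over $\f{\Phi}$, i.e.\ $\f{\Phi}\f{D}$ satisfies the robust NSP of order $s$ with constants $\gamma$ and $\widetilde{\tau} = \Theta(\tau)$. A union bound over the $\f{D}$- and $\f{\Phi}$-events yields total probability $1 - e^{-O(d)} - e^{-\Omega(m)}$, and \Cref{prop:recguarantee} applied with $\f{A} = \f{\Phi}\f{D}$ gives the stated $\ell_1$-synthesis error bounds for $\hat{\f{x}}$ and $\hat{\f{z}}$, matching \Cref{main_result_first_random_dictionary}; the bound on $\mathbb{E}\|\hat{\f{z}} - \f{z}_0\|_2$ follows on taking expectations and using $\mathbb{E}\|\f{D}\| < \infty$, valid since the entries of $\f{D}$ have $\log(n)$ sub-Gaussian moments.
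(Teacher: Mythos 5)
Your proof follows the paper's own route essentially step for step: the crux is the complexity estimate for $\mathbb{E}\big[\sum_{i\le s}((\f{D}^TV)_i^*)^2\big]^{1/2}$, and both you and the paper obtain it by conditioning on $V$, exploiting the independence of the columns $\f{d}_1,\ldots,\f{d}_n$ to reduce to a single column's marginal, invoking \Cref{weak_khintchine} to get $\|\langle\f{d}_1,V\rangle\|_{L^p}\lesssim \sqrt{p}\,\|V\|_2$, and then unconditioning through $\mathbb{E}\|V\|_2^2 = O(d)$ (which both arguments compute by the bounded-variance assumption together with the vanishing of the Rademacher cross-terms). The only cosmetic difference is that you feed the conditional moment bounds into \Cref{lemma:mendelson} directly, whereas the paper redoes the tail-integration argument of \cite[Lemma 6.5]{MendelsonLearning15} by hand with $p=\log(en/j)$; these are the same calculation.

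One remark on the concern you raise about the joint-versus-conditional expectation. You are right that the small-ball method, applied conditionally on $\f{D}$, needs a bound on the $\f{D}$-dependent width $W_m(\f{D})=m^{-1}\mathbb{E}_{\varepsilon,\f{\varphi}}\big[\sum_{i\le s}((\f{D}^TV)_i^*)^2\,\big|\,\f{D}\big]^{1/2}$ on an event of $\f{D}$-probability $1-e^{-\Omega(d)}$, and that a joint expectation bound plus Markov delivers this only off a constant-probability set. The paper does not address this point: its proof (both here and in \Cref{main_result_first_random_dictionary}) bounds the joint expectation and then states that one "conditions on $\f{D}$ and takes the expectation afterwards" before concluding via \Cref{small_ball_method}, without supplying the concentration of $W_m(\f{D})$ that would upgrade the Markov bound. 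So this is not a defect of your proposal relative to the paper; rather, you have surfaced a step that the paper leaves implicit. Your suggested remedies (a concentration estimate for $W_m(\f{D})$ as a functional of $\f{D}$, or running the small-ball argument jointly over $(\f{D},\f{\Phi})$) are the natural ways one would try to close it.
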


We highlight that the results also holds for independent but non identical entries of $\mathbf{D}$, as can be seen in the proof in \Cref{sec:proofs:random:main}. Note that we require in \Cref{main_result_second_random_dictionary} not only $O(\log(\e n/s))$, but $O(\log(n))$ sub-Gaussian moments, which deviates from previous results in this paper. 

\section{Proofs} \label{sec:proofs}
{In this section, we present the proofs of the statements of \Cref{sec:results:deterministic,sec:results:random}.}

\subsection{Proof of Theorem \ref{theorem:mendelson}}

As a preparation for the proof of Theorem \ref{theorem:mendelson}, we present a simple consequence of Markov's inequality.

\begin{lemma}{{\cite[Lemma 3.7]{mendelson2018improved}}}
\label{Lemma3.7_Mendelson}
Assume that a random variable $X$ satisfies $\|X\|_{L^p} \le A$. Then, for each $k>1$,
\begin{equation*}
    \mathbb{P}(|X|\ge k A) \le k^{-p}.
\end{equation*}
\end{lemma}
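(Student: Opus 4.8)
The plan is to derive this directly from Markov's inequality applied to the nonnegative random variable $|X|^p$. Since the map $t \mapsto t^p$ is nondecreasing on $[0,\infty)$ for $p > 0$, the events $\{|X| \ge kA\}$ and $\{|X|^p \ge (kA)^p\}$ coincide, so that
\[
\mathbb{P}(|X| \ge kA) = \mathbb{P}\big(|X|^p \ge (kA)^p\big) \le \frac{\mathbb{E}[|X|^p]}{(kA)^p},
\]
where the last step is Markov's inequality, valid since $(kA)^p > 0$ when $A > 0$ (and in the degenerate case $A=0$ the hypothesis forces $X=0$ almost surely, so the claimed bound holds trivially). It then remains to substitute the definition $\|X\|_{L^p}^p = \mathbb{E}[|X|^p]$ and the hypothesis $\|X\|_{L^p}\le A$, giving $\mathbb{E}[|X|^p]\le A^p$ and hence $\mathbb{P}(|X|\ge kA)\le A^p/(kA)^p = k^{-p}$, as claimed. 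Note that no lower bound on $k$ beyond positivity is actually needed for the inequality itself; the restriction $k>1$ is only what makes the bound nontrivial.

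There is essentially no obstacle here: the statement is a one-line corollary of Markov's inequality, recorded only so that it can be invoked cleanly (with $p$ chosen on the order of $\log(n/s)$) in the proof of \Cref{lemma:mendelson}. The sole minor points to keep in mind are that $p\ge 1$ so that $\|\cdot\|_{L^p}$ is genuinely the norm appearing in the hypothesis — though the argument above goes through verbatim for any $p>0$ — and that the degenerate case $A=0$ be handled separately as indicated.
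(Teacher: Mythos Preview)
Your proof is correct and matches the paper's approach: the paper simply introduces the lemma as ``a simple consequence of Markov's inequality'' without writing out the details, and your argument is exactly that one-line Markov computation.
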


\begin{proof}[{Proof of \Cref{theorem:mendelson}}]
{Fix $p = \log(\e n/s)$}. Defining $Z_i := z_i \mathds{1}_{|z_i| \le \e \lambda \sqrt{p}}$ for each $i \in [n]$, we obtain that 
\begin{equation} \label{eq:lemma51:proof:1}
     \mathbb{E}\left[\sum_{i=1}^s (Z_i^{*})^2\right] \leq  s \e^2 \lambda^2 p
\end{equation}
if $Z_i^*$ is the $i$-th largest coordinate of the vector $\f{Z}= (Z_1,\ldots,Z_n)$. Now we have to deal with the random variables $Y_i = z_i \mathds{1}_{|z_i| > \e \lambda \sqrt{p}}$. We estimate that
\begin{equation} \label{eq:lemma51:proof:2}
    \mathbb{E}\left[\sum_{i=1}^s (Y_i^{*})^2 \right] \le \mathbb{E} \sum_{i=1}^n Y_i^2 = \sum_{i=1}^n \mathbb{E} Y_i^2.
\end{equation}
Observe that by the H{\"o}lder inequality,
\begin{equation*}
    \mathbb{E}[Y_i ^2] \le \mathbb{E}[|z_i|^{2p}]^{1/p}\mathbb{P}(|z_i|\ge \e \lambda \sqrt{p})^{1-1/p}.
\end{equation*}
{Furthermore, using} the moment assumption {for $p = \log(en/s)$,} we obtain
\begin{equation*}
    \mathbb{E}[|z_i|^{2p}]^{1/p} \leq \lambda^2  (\sqrt{2 p})^2 {\leq 2\lambda^2 \log(\e n/s)}.
\end{equation*}
Applying \Cref{Lemma3.7_Mendelson} for $A = \lambda \sqrt{p}$ and $k = \e $, it further follows that
 \begin{equation*}
     \mathbb{P}(|z_i|\ge e \lambda \sqrt{p})^{1-1/p} \le (\e^{-p})^{1-1/p} = \e^{-p+1}= \frac{\e s}{\e n} = \frac{s}{n}
 \end{equation*}
 and therefore
\begin{equation*}
\sum_{i=1}^n \mathbb{E} Y_i^2 \leq  2 n \lambda^2 \log(\e n/s) \frac{s}{n} = 2 \lambda^2 s \log(\e n/s).
\end{equation*}
To get our desired bound, we {use that $z_i = Z_i + Y_i$, \cref{eq:lemma51:proof:1} and \cref{eq:lemma51:proof:2} to estimate }
\begin{equation*}
    \mathbb{E}\left[\sum_{i=1}^s({z_i^{*}})^2\right] \le \mathbb{E}\left[\sum_{i=1}^s(Z_i^{*})^2\right] + \mathbb{E}\left[\sum_{i=1}^s(Y_i^{*})^2\right] \le {s \e^2 \lambda^2 p + 2 \lambda^2 s \log\left(\frac{\e n}{s}\right) = (2+\e^2) \lambda^2 s \log\left(\frac{\e n}{s}\right),} 
\end{equation*}
and take the square root on both sides of the inequality. Using Jensen's inequality, we obtain the desired result {with constant $C = \sqrt{2 + \e^2}$.}
\end{proof}

\subsection{Application of Theorem \ref{theorem:mendelson}}
In this section we state and prove an important proposition to our analysis in which Theorem \ref{theorem:mendelson} plays a key role.

{\begin{proposition} \label{Proposition_Main_Result}
Assume that $\varphi_1,\ldots,\varphi_m$ are independent copies of a random vector $\varphi$ fulfilling condition (3) of either \Cref{assumption:1} or \Cref{assumption:2} for some {$\alpha \ge 
\frac{1}{2}$}. Let $\mathbf{D} = [\mathbf{d}_1,\ldots,\mathbf{d}_n] \in \mathbb{R}^{d\times n}$ be a dictionary whose columns have bounded $\ell_2$-norms, i.e., for all $i\in [n]$, $0<\|\mathbf{d}_i\|_2 \le \rho$. Define $V:= m^{-1/2}\sum_{i=1}^m \varepsilon_i \varphi_i$ where the $\epsilon_i$ are independent Rademacher random variables that are independent of the $\{\varphi_i\}_{i=1}^m$.
If also $m\gtrsim \log(\e n/s)^{\max(2\alpha-1,1)}$, then
\begin{equation*}
    \mathbb{E}\left[\sum_{i=1}^s \left((\mathbf{D}^{T}V)_i^{*}\right)^2\right]^{1/2} \lesssim  \rho \sqrt{s\log\Big(\frac{\e n}{s}\Big)},
\end{equation*}
where $(\mathbf{D}^{T}V)_i^{\ast}$ denotes the $i$-th coordinate of the non-increasing rearrangement of the vector $\mathbf{D}^{T}V$.
\end{proposition}}

As a tool to prove {Proposition} \ref{Proposition_Main_Result} in the case of \Cref{assumption:1}, we establish a Khintchine type inequality for $V$.

\begin{lemma}
\label{Easy_Khintchine}
Let $V:= m^{-1/2}\sum_{i=1}^m \varepsilon_i \varphi_i$ where the {$\varphi_i \in \mathbb{R}^d$} are independent and satisfy the conditions of \Cref{assumption:1} {for some $\alpha \ge \frac{1}{2}$} and $\{\varepsilon_i\}_{i=1}^m$ are independent Rademacher random variables. Then, for every $\f{a} \in \mathbb{S}^{d-1}$, if $p\lesssim \log (\frac{\e n}{s})$ and $m\gtrsim \log(\e n/s)^{\max(2\alpha-1,1)}$, where $\alpha$ is the moment growth parameter of \Cref{D_satisfies_NSP_whp}, it holds that
\begin{equation*}
    \|\langle \f{a},V \rangle \|_{L^p} \lesssim \sqrt{p}. 
\end{equation*}
\end{lemma}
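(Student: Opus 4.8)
The plan is to reduce the claim to a one–dimensional moment estimate for a sum of i.i.d.\ random variables with a logarithmic number of controlled moments. Fix $\f{a}\in\mathbb{S}^{d-1}$. Since $V=m^{-1/2}\sum_{i=1}^m\varepsilon_i\varphi_i$, we have $\langle\f{a},V\rangle=m^{-1/2}\sum_{i=1}^m Z_i$ with $Z_i:=\varepsilon_i\langle\f{a},\varphi_i\rangle$. By condition (3) of \Cref{assumption:1} the $Z_i$ are i.i.d., symmetric (hence centered), with $\|Z_i\|_{L^r}=\|\langle\f{a},\varphi_i\rangle\|_{L^r}\le\lambda r^{\alpha}$ for all $2\le r\lesssim\log(n/s)$, and in particular $\sigma:=\|Z_1\|_{L^2}\le 2^{\alpha}\lambda=O(1)$. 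Thus it suffices to prove $\big\|\sum_{i=1}^m Z_i\big\|_{L^p}\lesssim\sqrt{pm}$ for all $2\le p\lesssim\log(n/s)$, and then divide by $\sqrt m$. (One could instead condition on the $\varphi_i$ and apply the scalar Khintchine inequality, which would reduce matters to a bound $\big\|\sum_{i=1}^m\langle\f{a},\varphi_i\rangle^2\big\|_{L^{p/2}}\lesssim m$; we avoid this, because squaring doubles the moment–growth exponent and forces the stronger requirement $m\gtrsim\log(n/s)^{2\alpha}$ rather than $m\gtrsim\log(n/s)^{\max(2\alpha-1,1)}$.)

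For the moment estimate I would invoke a sharp Rosenthal/Lata{\l}a-type inequality for sums of independent centered random variables — in the form that controls the ``peak'' (heavy-tail) contribution at the $p$-th moment level by the moments $\|Z_1\|_{L^s}$ at \emph{all} intermediate levels $2\le s\le p$, and not merely by the single quantity $p\,(\sum_i\mathbb{E}|Z_i|^p)^{1/p}$. Combined with $\|Z_1\|_{L^s}\le\lambda s^{\alpha}$ and $\sigma\le 2^{\alpha}\lambda$, this yields
\begin{equation*}
\Big\|\sum_{i=1}^m Z_i\Big\|_{L^p}\;\lesssim\;\sigma\sqrt{pm}\;+\;\lambda\,p^{\alpha}\;+\;\lambda\,(\log m)^{\alpha},
\end{equation*}
where the first summand is the ``sub-Gaussian'' part and the remaining two make up the ``peak'' part: for $\alpha\le 1$ this is a Bernstein inequality for sub-exponential/sub-Weibull variables (no $(\log m)^\alpha$ term is needed), while for $\alpha>1$ it is a Bernstein inequality for sub-Weibull variables of shape $1/\alpha<1$, where the $(\log m)^{\alpha}$ arises from a union bound over the $m$ summands. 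Since only the first $\sim\log(n/s)$ moments of $Z_1$ are assumed finite but only moments up to order $p\lesssim\log(n/s)$ of the sum are required, a standard localization legitimizes this ``sub-Weibull'' input. Alternatively, the displayed bound can be obtained by hand by truncating each $Z_i$ at level $\asymp\lambda p^{\alpha}$, exactly as in the proof of \Cref{lemma:mendelson}; the subtlety is that on the truncated part one must use the moment bound $\|Z_i\|_{L^s}\le\lambda s^{\alpha}$ at intermediate $s$, not merely boundedness by $\lambda p^\alpha$, in order to keep the peak term at $\lambda p^{\alpha}$ instead of $\lambda p^{1+\alpha}$.

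Finally I would close the estimate arithmetically. Dividing the last display by $\sqrt m$ gives $\|\langle\f{a},V\rangle\|_{L^p}\lesssim\sigma\sqrt p+\lambda p^{\alpha}m^{-1/2}+\lambda(\log m)^{\alpha}m^{-1/2}$. Here $\sigma\sqrt p\lesssim\sqrt p$. For the second term, since $p\lesssim\log(n/s)$, $\alpha\ge\tfrac12$ (so $2\alpha-1\ge 0$), and $m\gtrsim\log(n/s)^{\max(2\alpha-1,1)}$, we get $p^{2\alpha}=p\cdot p^{2\alpha-1}\lesssim p\cdot\log(n/s)^{2\alpha-1}\lesssim p\cdot\log(n/s)^{\max(2\alpha-1,1)}\lesssim pm$, hence $\lambda p^{\alpha}m^{-1/2}\lesssim\lambda\sqrt p\lesssim\sqrt p$. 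For the third term, $(\log m)^{\alpha}m^{-1/2}\le\sup_{t\ge 1}(\log t)^\alpha t^{-1/2}=:c_\alpha<\infty$, a constant depending only on $\alpha$, and $\sqrt p\ge\sqrt 2$, so $\lambda(\log m)^{\alpha}m^{-1/2}\le\lambda c_\alpha\lesssim\sqrt p$ as well. Combining, $\|\langle\f{a},V\rangle\|_{L^p}\lesssim\sqrt p$, as claimed.

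The main obstacle is the moment estimate of the second paragraph. A naive use of Rosenthal's inequality (or of Bernstein's inequality after truncating $Z_i$ at the ``correct'' level $\lambda p^\alpha$) costs an extra factor of $p$ in the peak term — producing $\lambda p^{1+\alpha}$ — which would only close under the far stronger requirement $m\gtrsim\log(n/s)^{1+2\alpha}$; for instance already in the (near-)Gaussian case $\alpha=\tfrac12$ this reads $m\gtrsim\log(n/s)^2$ instead of the desired $m\gtrsim\log(n/s)$. Obtaining the peak term $\lambda p^{\alpha}$ at the stated sample size is precisely where the assumption that \emph{all} of the first $\sim\log(n/s)$ moments of $\langle\f{a},\varphi\rangle$ grow like $\lambda r^{\alpha}$ (not just the top one) is used, via the intermediate-moment form of the Rosenthal/Lata{\l}a inequality.
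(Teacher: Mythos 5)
Your argument is correct and matches the paper's approach: rewrite $\langle\f{a},V\rangle$ as the normalized i.i.d.\ sum $m^{-1/2}\sum_{i=1}^m Z_i$ with $Z_i=\varepsilon_i\langle\f{a},\varphi_i\rangle$, then control its $p$-th moment by a Rosenthal/Lata{\l}a-type estimate that exploits the moment bounds $\|Z_1\|_{L^s}\lesssim s^\alpha$ at all intermediate levels $2\le s\le p$. The paper compresses your second and third paragraphs into a direct appeal to \cite[Lemma 2.8]{ML17}, which is exactly the moment inequality you re-derive (including the threshold $m\gtrsim p^{\max(2\alpha-1,1)}$ that makes the peak term subordinate to $\sqrt{pm}$), so the two proofs coincide in substance.
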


\begin{proof}[{Proof of \Cref{Easy_Khintchine}}]
By definition of $V$ we see that
\begin{equation*}
    \|\langle \f{a},V \rangle \|_{L^p} = \left\|\frac{1}{\sqrt{m}}\sum_{i=1}^m\varepsilon_i\langle \f{a}, \varphi_i \rangle \right\|_{L^p}.
\end{equation*}
All random variables $\left\{\varepsilon_i \langle \f{a},\varphi_i \rangle\right\}_{i=1}^m$ are i.i.d. copies of a random variable with the first $p$ moments of order $p^{\alpha}$ due to condition (3) of \Cref{assumption:1}. Thus, by \cite[Lemma 2.8]{ML17}, there exists a constant $c(\alpha)$ that depends only on $\alpha$ such that
\begin{equation*}
    \left\|\frac{1}{\sqrt{m}}\sum_{i=1}^m\varepsilon_i\langle \f{a}, \varphi_i \rangle \right\|_{L^p} \leq c(\alpha)\lambda \sqrt{p}.
\end{equation*}
\end{proof}

To show {Proposition} \ref{Proposition_Main_Result}, we need a statement analogue to \Cref{Easy_Khintchine} that holds for {$\varphi_i$} as in \Cref{assumption:2}. {The} challenge is that a moment bound on the marginals $\langle \f{a},\varphi_i \rangle$ is not directly given by \Cref{assumption:2}. We use a moment comparison argument to overcome this issue and establish a Khintchine inequality under weak moment assumptions (weaker than the standard subgaussian assumption). Similar arguments have already appeared in the literature, see \cite[Chapter 3]{kwapien1992random}.

\begin{lemma}[Khintchine inequality under weak moment assumption]
\label{weak_khintchine}
Suppose $X=(x_1,\ldots,x_d) \in \mathbb{R}^d$ with independent mean zero $x_i$ and also assume, for all $i\in [d]$, $\|x_i\|_{L^p} = O(\sqrt{p})$ for $1\le p\le k$. Then, for all vectors $\f{a}\in \mathbb{R}^d$, we have
\begin{equation*}
    \|\langle X,\f{a}\rangle\|_{L^p} = O(\sqrt{p} \|\f{a}\|_2), \quad p\in [1,k].
\end{equation*}
\end{lemma}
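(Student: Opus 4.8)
The plan is to reduce the weak Khintchine inequality for the linear form $\langle X, \f{a}\rangle$ to a statement about sums of independent, mean-zero random variables with controlled moments, and then invoke a known moment bound for such sums. First I would normalize: by homogeneity in $\f{a}$ it suffices to treat $\|\f{a}\|_2 = 1$, so write $\langle X, \f{a}\rangle = \sum_{i=1}^d a_i x_i =: \sum_{i=1}^d w_i$, where the summands $w_i = a_i x_i$ are independent, mean zero, and satisfy $\|w_i\|_{L^p} = |a_i|\,\|x_i\|_{L^p} = O(\sqrt{p}\,|a_i|)$ for $1 \le p \le k$.

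The core step is a symmetrization-plus-contraction argument together with a moment estimate for Rademacher sums. Introduce independent Rademacher signs $\varepsilon_i$, independent of $X$; by the standard symmetrization inequality in $L^p$ (e.g.\ \cite[Chapter 3]{kwapien1992random}), $\|\sum_i w_i\|_{L^p} \le 2\|\sum_i \varepsilon_i w_i\|_{L^p}$. Conditionally on $X$, the Khintchine inequality for Rademacher sums gives $\|\sum_i \varepsilon_i w_i\|_{L^p} \le C\sqrt{p}\,\big(\sum_i w_i^2\big)^{1/2}$ in $L^p$, so after taking expectations and using $\|\cdot\|_{L^p} \le \|\cdot\|_{L^q}$ for $p \le q$ (choosing $q$ comparable to $p$, or just $q = p$ and then bounding the $p$-th moment of $\sum_i w_i^2$), it remains to control $\mathbb{E}\big[(\sum_i w_i^2)^{p/2}\big]^{2/p} = \big\|\sum_i w_i^2\big\|_{L^{p/2}}$. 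Here I would use the triangle inequality in $L^{p/2}$ only if $p \ge 2$; for $p/2 \ge 1$ one gets $\big\|\sum_i w_i^2\big\|_{L^{p/2}} \le \sum_i \|w_i^2\|_{L^{p/2}} = \sum_i \|w_i\|_{L^p}^2 = O(p \sum_i a_i^2) = O(p)$, which yields $\|\sum_i \varepsilon_i w_i\|_{L^p} = O(\sqrt{p}\cdot\sqrt{p}) = O(p)$ — this is too weak by a factor $\sqrt{p}$. So the naive triangle-inequality bound does not suffice; instead I would apply a Rosenthal-type inequality (or Latała's moment bound for sums of independent mean-zero variables) directly to $\sum_i w_i$, which states $\|\sum_i w_i\|_{L^p} \lesssim \sqrt{p}\,\big(\sum_i \|w_i\|_{L^2}^2\big)^{1/2} + p^{\beta}\,\big(\sum_i \|w_i\|_{L^p}^p\big)^{1/p}$-type terms; with $\|w_i\|_{L^2} = |a_i|\|x_i\|_{L^2} = O(|a_i|)$ the first term is $O(\sqrt{p}\,\|\f{a}\|_2) = O(\sqrt{p})$, and the second term, since $\|w_i\|_{L^p} = O(\sqrt{p}|a_i|)$ and $\sum |a_i|^p \le (\sum a_i^2)^{p/2} = 1$ for $p \ge 2$, contributes only $O(\sqrt{p}\cdot p^{\text{const}/p}) = O(\sqrt{p})$. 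Combining, $\|\langle X,\f{a}\rangle\|_{L^p} = O(\sqrt{p}) = O(\sqrt{p}\,\|\f{a}\|_2)$ as claimed, for all $p \in [1,k]$ (the case $1 \le p \le 2$ follows from the $p=2$ case by monotonicity of $L^p$ norms).

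The main obstacle is getting the \emph{sharp} $\sqrt{p}$ dependence rather than a lossy $p$: a crude symmetrization-and-triangle-inequality pass loses a factor $\sqrt{p}$, so one genuinely needs a Rosenthal/Latała-type moment inequality for sums of independent random variables (or, equivalently, a careful two-term bound separating the "Gaussian" part $\sqrt{p}\,(\sum\|w_i\|_{L^2}^2)^{1/2}$ from the "heavy-tail" part governed by $\|w_i\|_{L^p}$) and then to verify that, under $\|\f{a}\|_2 = 1$ and $p \ge 2$, the heavy-tail term is also $O(\sqrt{p})$ because $\sum_i |a_i|^p \le 1$. I would also double-check the constant tracking in the moment bound cited from \cite{ML17} or \cite{kwapien1992random} to make sure the implied $O(\cdot)$ depends only on the implicit constant in the hypothesis $\|x_i\|_{L^p} = O(\sqrt{p})$ and not on $d$ or $p$.
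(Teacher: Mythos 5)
Your approach differs from the paper's, and it contains a genuine gap at the step where you invoke a Rosenthal/Lata{\l}a-type moment inequality. You write a bound of the form
$$\Big\|\sum_i w_i\Big\|_{L^p} \lesssim \sqrt{p}\Big(\sum_i \|w_i\|_{L^2}^2\Big)^{1/2} + p^{\beta}\Big(\sum_i \|w_i\|_{L^p}^p\Big)^{1/p}$$
and then argue that the second term contributes $O(\sqrt{p}\cdot p^{\mathrm{const}/p}) = O(\sqrt{p})$; this requires $\beta = O(1/p)$, i.e.\ an essentially dimension-free constant on the heavy-tail term. No such form of Rosenthal's inequality holds. The sharp two-term Rosenthal bound (Johnson--Schechtman--Zinn, Pinelis, Ibragimov--Sharakhmetov) has a factor of order $p/\log p$ on the term $(\sum_i\|w_i\|_{L^p}^p)^{1/p}$, and this is \emph{necessary}: take $n$ i.i.d.\ copies of a centered indicator with parameter $1/n$, so that $\sum_i\|w_i\|_{L^2}^2 = O(1)$ and $(\sum_i\|w_i\|_{L^p}^p)^{1/p} = O(1)$, yet $\|\sum_i w_i\|_{L^p}\asymp p/\log p$ (Poisson moment growth). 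Your hypothesized inequality would bound this by $O(\sqrt{p})$, a contradiction. Plugging the correct $p/\log p$ factor into your estimate gives $O(p^{3/2}/\log p)$, not $O(\sqrt{p})$; for instance, with $\f{a} = \f{e}_1$ your bound on the heavy-tail term is $\frac{p}{\log p}\cdot\sqrt{p}\cdot 1 = p^{3/2}/\log p$, whereas of course $\|\langle X, \f{e}_1\rangle\|_{L^p} = \|x_1\|_{L^p} = O(\sqrt{p})$ by hypothesis. So Rosenthal's inequality, even in its sharp form, is simply lossy here: it consumes only the $L^2$ and $L^p$ norms of the summands and discards the fact that the $x_i$ have sub-Gaussian moment growth at \emph{every} intermediate order $2\le q\le p$, which is exactly what makes the $\sqrt{p}$ bound true.

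The paper's proof exploits that extra information directly and avoids Rosenthal entirely. After symmetrizing (which you also do), it takes $p$ an even integer, expands $\mathbb{E}\big[(\sum_i a_i x_i)^p\big]$ into monomials $\binom{p}{d_1,\dots,d_t}\prod_j a_{i_j}^{d_j}\mathbb{E}[x_{i_j}^{d_j}]$, drops all terms with any odd $d_j$ by symmetry, and then bounds each remaining factor $\mathbb{E}[x_{i_j}^{d_j}] \le (\lambda\sqrt{d_j})^{d_j} \le (C\lambda)^{d_j}\,\mathbb{E}[g^{d_j}]$ term by term against the corresponding Gaussian moment (here $d_j\le p\le k$, so the moment assumption is available at exactly the right order). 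Summing the monomials gives $\mathbb{E}\big[(\sum_i a_i x_i)^p\big] \le (C\lambda)^p\,\mathbb{E}\big[\langle g,\f{a}\rangle^p\big]$, and the Gaussian $p$-norm estimate $\|\langle g,\f{a}\rangle\|_{L^p} = O(\sqrt{p}\|\f{a}\|_2)$ finishes. If you want to stay within the sum-of-independent-variables framework rather than do this monomial comparison, you would need a result that genuinely uses all intermediate moments, e.g.\ Lata{\l}a's exact two-sided formula for $\|\sum w_i\|_{L^p}$, but that would require nontrivial additional work to extract the $\sqrt{p}$ bound under the present hypotheses; it cannot be cited as a black box in the form you have written.
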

\begin{proof}[{Proof of \Cref{weak_khintchine}}]
By a standard symmetrization argument 
\[
\|\langle X,\f{a}\rangle\|_{L^p} = \Big\|\sum_{i=1}^d \f{a}_i x_i \Big\|_{L^p} \le 2\Big\|\sum_{i=1}^{d} \varepsilon_i \f{a}_i x_i\Big\|_{L^p},
\]
where $\varepsilon_i$ are independent standard Rademacher random variables. So we may assume that our vector $X$ has independent symmetric coordinates, i.e, for all vectors $\f{a}$, the random variables $\langle X,\f{a} \rangle$ and $-\langle X,\f{a} \rangle$ have the same distribution. Without loss of generality assume $p$ to be an even integer. Then, by expanding in monomials,
\begin{equation*}
    \mathbb{E}\left[\bigg(\sum_{i=1}^d \f{a}_i x_i\bigg)^p\right] = \sum_{i=1}^{\min(p,d)} \sum_{i_1<i_2<\ldots<i_t}\sum_{d_1+\ldots +d_t =p, d_i\ge 1} \binom{p}{d_1,\ldots,d_t}\left(\prod_{j=1}^t \f{a}_{i_j}^{d_j} \mathbb{E} \left[x_{i_j}^{d_j}\right]
    \right)
\end{equation*}
Any odd moment vanishes due to the symmetry. The even moments $\mathbb{E} [x_{i_j}^{d_j}]$ (until "k") are subgaussian by hypothesis, so it follows that the $p$-norm of our inner product $\langle X,\f{a}\rangle$ is dominated (up to an absolute constant) by the $p$-norm of $\langle g , \f{a}\rangle$, where $g$ is a standard Gaussian vector. Precisely, we obtain that
\begin{equation*}
    \mathbb{E}\left[\bigg(\sum_{i=1}^d \f{a}_i x_i\bigg)^p\right] \lesssim \sum_{i=1}^{\min(p,d)} \sum_{i_1<i_2<\ldots<i_t}\sum_{d_1+\ldots +d_t =p, d_i\ge 1} \binom{p}{d_1,\ldots,d_t}\left(\prod_{j=1}^t \f{a}_{i_j}^{d_j} \mathbb{E} \left[g_{i_j}^{d_j}\right]
    \right) = \mathbb{E}[\langle g, \f{a}\rangle^p]
\end{equation*}
{Note} that the random variable $\langle g, \f{a}\rangle$ follows a Gaussian distribution with zero mean and variance $\|\f{a}\|_2^2$. The proof now follows by standard estimates for the $p$-norms of Gaussian distributions. 
\end{proof}

\medskip

Using \Cref{weak_khintchine}, we now establish a lemma that is analogous to \Cref{Easy_Khintchine}.
\begin{lemma} \label{lemma:assumption2:khintchine}
Let $V:= m^{-1/2}\sum_{i=1}^m \varepsilon_i \varphi_i$, where the {$\varphi_i \in \mathbb{R}^d$} are independent satisfying the conditions of \Cref{assumption:2} {for some $\alpha \ge \frac 12$} and $\{\varepsilon_i\}_{i=1}^m$ are independent Rademacher random variables. Suppose that $m\gtrsim \log(\e n/s)^{\max(2\alpha-1,1)}$. Then, for every $\f{a} \in \mathbb{S}^{d-1}$ and $p\lesssim \log (\frac{\e n}{s})$, 
\begin{equation*}
    \|\langle \f{a},V \rangle \|_{L^p} \lesssim \sqrt{p}. 
\end{equation*}
\end{lemma}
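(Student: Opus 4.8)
The goal of \Cref{lemma:assumption2:khintchine} is to reproduce the conclusion of \Cref{Easy_Khintchine}, but now the rows $\varphi_i$ satisfy \Cref{assumption:2} rather than \Cref{assumption:1}, so the marginal moment bound $\|\langle \f{a},\varphi_i\rangle\|_{L^p}\lesssim p^\alpha$ is no longer an assumption but must be derived. The plan is to first invoke \Cref{weak_khintchine} to obtain a marginal bound on $\langle \f{a},\varphi_i\rangle$, and then to feed this into the same machinery (\cite[Lemma 2.8]{ML17}) used in the proof of \Cref{Easy_Khintchine} to handle the Rademacher-weighted sum over $i=1,\dots,m$.

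\medskip

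\begin{proof}[{Proof of \Cref{lemma:assumption2:khintchine}}]
Fix $\f{a}\in \mathbb{S}^{d-1}$. By \Cref{assumption:2}, each $\varphi_i = (\xi_1^{(i)},\ldots,\xi_d^{(i)})$ has independent, mean-zero entries with $\|\xi_j^{(i)}\|_{L^p}\le \lambda p^{1/2}$ for all $2\le p\lesssim \log(n/s)$; in particular this holds with growth parameter $1/2$, so the hypothesis of \Cref{weak_khintchine} is met with $k\gtrsim \log(n/s)$. Applying \Cref{weak_khintchine} with the vector $\f{a}$,
\begin{equation*}
    \|\langle \f{a},\varphi_i\rangle\|_{L^p} = O(\sqrt{p}\,\|\f{a}\|_2) = O(\sqrt{p}), \qquad 2\le p\lesssim \log\Big(\frac{n}{s}\Big),
\end{equation*}
uniformly in $i\in[m]$. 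Thus the random variables $\{\varepsilon_i\langle \f{a},\varphi_i\rangle\}_{i=1}^m$ are i.i.d.\ copies of a symmetric random variable whose first $\lesssim \log(n/s)$ moments grow like $p^{\alpha}$ with $\alpha = 1/2$. Now we repeat verbatim the argument in the proof of \Cref{Easy_Khintchine}: by \cite[Lemma 2.8]{ML17}, applied with moment growth parameter $\alpha=1/2$ (so the side condition $m\gtrsim \log(n/s)^{\max(2\alpha-1,1)} = \log(n/s)$ holds by hypothesis), there is an absolute constant such that
\begin{equation*}
    \|\langle \f{a},V\rangle\|_{L^p} = \left\|\frac{1}{\sqrt{m}}\sum_{i=1}^m \varepsilon_i\langle \f{a},\varphi_i\rangle\right\|_{L^p} \lesssim \sqrt{p}
\end{equation*}
for all $2\le p\lesssim \log(n/s)$, which is the claimed bound. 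The case $p=1$ follows from the case $p=2$ by monotonicity of $L^p$-norms.
\end{proof}

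\medskip

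The only subtlety worth flagging is the bookkeeping on the moment range: \Cref{weak_khintchine} requires $p\le k$ where $k$ is the number of sub-Gaussian moments of the entries, and here $k \asymp \log(n/s)$, so the derived marginal bound is valid exactly on the range $p\lesssim \log(n/s)$ that \cite[Lemma 2.8]{ML17} needs as input; no moments beyond that range are ever used. I do not anticipate a genuine obstacle here — the lemma is essentially a composition of two already-established facts (the weak Khintchine inequality of \Cref{weak_khintchine} for the spatial direction, and the ML17 moment-growth lemma for the sum over samples) — the work was all in proving \Cref{weak_khintchine} itself. One should only be careful that the implicit constants absorbed into $O(\cdot)$ and $\lesssim$ depend only on $\lambda$ and on the absolute constant in \cite[Lemma 2.8]{ML17}, both of which are distributional parameters treated as constants per the remark following \Cref{assumption:1}.
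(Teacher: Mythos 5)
The proof has a genuine gap at the very first step: you assert that \Cref{assumption:2} gives $\|\xi_j^{(i)}\|_{L^p}\le\lambda p^{1/2}$, i.e., sub-Gaussian growth $\alpha=1/2$, but condition (3) of \Cref{assumption:2} only guarantees $\|\xi_i\|_{L^p}\le\lambda p^{\alpha}$ for some $\alpha\ge 1/2$. For $\alpha>1/2$, the entries of $\varphi_i$ are \emph{not} sub-Gaussian on the range $p\lesssim\log(n/s)$, so the hypothesis of \Cref{weak_khintchine} is not met and you cannot invoke it to bound $\|\langle\f{a},\varphi_i\rangle\|_{L^p}$ by $\sqrt{p}$. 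Indeed, taking $\f{a}=\f{e}_1$ shows that no bound better than $p^{\alpha}$ can be obtained uniformly over $\f{a}\in\mathbb{S}^{d-1}$ at the level of a single $\varphi_i$, so the marginal estimate you feed into \cite[Lemma 2.8]{ML17} is simply false in general.

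The issue is the order of operations. The moment-improvement step (\cite[Lemma 2.8]{ML17}/\cite[Lemma 2.8]{DLR18}), which turns $p^{\alpha}$ growth into $\sqrt{p}$ growth given $m\gtrsim\log(n/s)^{\max(2\alpha-1,1)}$ i.i.d.\ copies, must be applied \emph{before} \Cref{weak_khintchine}, because it is what manufactures the sub-Gaussian coordinates that \Cref{weak_khintchine} requires. The paper's proof does exactly this: it looks at the coordinates $V_j=m^{-1/2}\sum_{i=1}^m\varepsilon_i(\varphi_i)_j$, applies the DLR18/ML17 lemma entrywise (this is where the averaging over $m$ independent copies and the condition on $m$ enter) to get $\|V_j\|_{L^p}=O(\sqrt{p})$ for $p\lesssim\log(n/s)$, observes that the $V_j$ are independent because the entries of each $\varphi_i$ are independent, and only then applies \Cref{weak_khintchine} to the vector $V$ to combine across coordinates. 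Your proof does combine-across-coordinates first and average-over-samples second; the two commute only in the special case $\alpha=1/2$ you implicitly restricted to, but not for the general $\alpha\ge1/2$ allowed by \Cref{assumption:2}.
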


\begin{proof}[{Proof of \Cref{lemma:assumption2:khintchine}}]
Observe that the entries {$V_j := m^{-1/2}\sum_{i=1}^m \varepsilon_i (\varphi_i[j])$} of $V$ are a normalized sum of i.i.d random variables with moments {bounds} of order $p^{\alpha}$ due to \Cref{assumption:2}. By {\cite[Lemma 2.8]{ML17}}, we know that $\|V_j\|_{L^p} = O(\sqrt{p})$, for $p=O(\log(\e n/s))$. Moreover, by independence of the entries of the $\varphi_i$, the entries $V_j$ are also independent, therefore Lemma \ref{weak_khintchine} can be applied to finish the proof.
\end{proof}
\medskip

We now go ahead and combine the previous results to show {Proposition} \ref{Proposition_Main_Result}.
\medskip

\begin{proof}[{Proof of {Proposition} \ref{Proposition_Main_Result}}]
Let $V:= m^{-1/2}\sum_{i=1}^m \varepsilon_i\varphi_i$, {which} is defined with independent $\varphi_i$ that are centered random vectors fulfilling the last condition of \Cref{assumption:1} or of \Cref{assumption:2} and the $\epsilon_i$ are independent Rademacher random variables that are independent from the vectors $\{\varphi_i\}_{i=1}^m$. Let $\mathbf{D} = [\mathbf{d}_1,\ldots,\mathbf{d}_n] \in \mathbb{R}^{d\times n}$ be a dictionary such that $0<\|\mathbf{d}_i\|_2 \le \rho$ for all $i\in [n]$. First, we note that using the assumptions and \Cref{Easy_Khintchine} or \Cref{lemma:assumption2:khintchine}, respectively, we see that

\[
\|  \langle \f{d}_i , V \rangle \|_{L^2} =  \|\f{d}_i\|_2  \left\| \langle \f{d}_i / \|\f{d}_i\|_2, V \rangle \right\|_{L^2} \lesssim  \rho.
\]
{For all $i\in [n]$ such that $\|\langle \f{d}_i , V \rangle\|_{L_2} >0$, we define $z_i := \left(\f{D}^T V\right)_i/\|\langle \f{d}_i , V \rangle\|_{L_2} = \langle \f{d}_i , V \rangle / \|\langle \f{d}_i , V \rangle\|_{L_2}$}.  We can again use \Cref{Easy_Khintchine} or \Cref{lemma:assumption2:khintchine}, respectively, to see that $z_i$ fulfills the assumptions of \Cref{theorem:mendelson} in each case, in particular $\|z_i\|_{L_p} \leq C \rho \sqrt{p}$ for all $i \in [n]$ for some constant $C > 0$. This then implies
\begin{equation*}
  { \mathbb{E}\left[\sum_{i=1}^s \left((\mathbf{D}^{T}V)_i^{*}\right)^2\right]^{1/2} = \mathbb{E}\left[\sum_{i=1}^s \left(z_i^{*} \|\langle \f{d}_i , V \rangle\|_{L_2}\right)^2\right]^{1/2} \lesssim \rho \sqrt{ s\log(\frac{\e n}{s})}},
\end{equation*}
which is the desired bound.
\end{proof}

\subsection{Proofs of \Cref{main_result_deterinistic_dictionary,main_corollary_deterministic_dictionary}} \label{sec:proofs:deterministic}
In this section, we detail the proofs of \Cref{main_result_deterinistic_dictionary,main_corollary_deterministic_dictionary}.

The main idea of our proofs is based on the so-called \emph{small-ball method} that was first developed by Koltchinskii and Mendelson \cite{koltchinskii2015bounding,MendelsonLearning15,ML17}. As a preparation, we consider the following definitions.

\begin{definition}{(Marginal tail function)}
For a random vector $\psi \in \mathbb{R}^d$, a subset $S\subset \mathbb{R}^d$ and a fixed $A>0$, define the marginal tail function $Q_{A}(S;\f{\psi})$ by
\begin{equation*}
    Q_{A}(S,\psi) := \inf_{\f{x} \in S} \mathbb{P}(|\langle \f{x},\psi\rangle| \ge A).
\end{equation*}
\end{definition}

\begin{definition}{(Mean empirical width {\cite{Tropp15}, Rademacher complexity)}}
For a random vector $\f{\psi} \in \mathbb{R}^d$ and a subset $S\subset \mathbb{R}^d$, the mean empirical width of $S$ is
\begin{equation*}
    W_m(S,\f{\psi}) := \mathbb{E}\sup_{\f{x}\in S} \Big\langle \f{x}, \frac{1}{m} \sum_{i=1}^m \varepsilon_i \f{\psi}_i \Big\rangle,
\end{equation*}
where $\f{\psi}_1,\ldots,\f{\psi}_m$ are i.i.d. copies of $\f{\psi}$ and $\varepsilon_1,\ldots,\varepsilon_m$ are indepdendent Rademacher random variables that are independent of the $\{\psi_i\}_{i=1}^m$.
\end{definition}
The following proposition is at the core of our proof.
\begin{proposition}[{\cite[Theorem 1.5]{koltchinskii2015bounding},\cite[Proposition 5.1]{Tropp15}}]
\label{small_ball_method}
Fix a set $S \subset \mathbb{R}^n$. Let $\psi \in \mathbb{R}^n$ be a random vector and let $\Psi \in \mathbb{R}^{m\times n}$ be a random matrix whose rows are i.i.d copies of $\psi$. Then, for any $t>0$ and $A>0$,
\begin{equation*}
    \inf_{\f{x} \in S} \|\Psi \f{x}\|_2 \ge A \sqrt{m} Q_{2A}(S;\psi) -2 \sqrt{m} W_m(S;\psi) - A t,
\end{equation*}
with probability at least $1-e^{-t^2/2}$.
\end{proposition}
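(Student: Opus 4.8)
The plan is to run the ``small-ball method'' of Koltchinskii and Mendelson, which reduces the lower bound for $\inf_{\f{x}\in S}\|\Psi\f{x}\|_2$ to a one-sided uniform deviation estimate for a \emph{bounded} empirical process. The starting point is the elementary pointwise bound, valid for every $\f{x}$:
\[
\|\Psi\f{x}\|_2 \,\ge\, \frac{1}{\sqrt{m}}\|\Psi\f{x}\|_1 \,=\, \frac{1}{\sqrt{m}}\sum_{i=1}^m |\langle\psi_i,\f{x}\rangle| \,\ge\, A\sqrt{m}\cdot\frac{1}{m}\sum_{i=1}^m \mathbf{1}\big[|\langle\psi_i,\f{x}\rangle|\ge A\big],
\]
using $\|\f{u}\|_1\le\sqrt{m}\,\|\f{u}\|_2$ for $\f{u}\in\mathbb{R}^m$ and $|\langle\psi_i,\f{x}\rangle|\ge A\,\mathbf{1}[|\langle\psi_i,\f{x}\rangle|\ge A]$. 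Hence it suffices to lower bound $\inf_{\f{x}\in S}\frac{1}{m}\sum_{i=1}^m \mathbf{1}[|\langle\psi_i,\f{x}\rangle|\ge A]$.

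Since the indicator is discontinuous, I would replace it from below by a soft surrogate $\chi\colon\mathbb{R}\to[0,1]$ that is $0$ on $[-A,A]$, is $1$ off $[-2A,2A]$, and is affine in between; thus $\chi$ is $(1/A)$-Lipschitz, $\chi(0)=0$, and $\mathbf{1}[|s|\ge 2A]\le\chi(s)\le\mathbf{1}[|s|\ge A]$ for all $s$. Setting $Z:=\sup_{\f{x}\in S}\big(\mathbb{E}[\chi(\langle\psi,\f{x}\rangle)]-\tfrac{1}{m}\sum_{i=1}^m\chi(\langle\psi_i,\f{x}\rangle)\big)$, we obtain for every $\f{x}\in S$
\[
\frac{1}{m}\sum_{i=1}^m \mathbf{1}\big[|\langle\psi_i,\f{x}\rangle|\ge A\big] \,\ge\, \frac{1}{m}\sum_{i=1}^m \chi(\langle\psi_i,\f{x}\rangle) \,\ge\, \mathbb{E}\big[\chi(\langle\psi,\f{x}\rangle)\big]-Z \,\ge\, Q_{2A}(S;\psi)-Z,
\]
where the last step uses $\mathbb{E}[\chi(\langle\psi,\f{x}\rangle)]\ge\mathbb{P}(|\langle\psi,\f{x}\rangle|\ge 2A)\ge Q_{2A}(S;\psi)$ for $\f{x}\in S$. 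So it remains to show $Z\le\frac{2}{A}W_m(S;\psi)+\frac{t}{\sqrt{m}}$ with probability at least $1-e^{-t^2/2}$.

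The fluctuation of $Z$ is handled by the bounded differences inequality: since $\chi\in[0,1]$, replacing a single $\psi_j$ moves $Z$ by at most $1/m$, so $Z\le\mathbb{E}Z+\frac{t}{\sqrt{m}}$ with probability at least $1-e^{-t^2/2}$. For $\mathbb{E}Z$, symmetrization gives $\mathbb{E}Z\le 2\,\mathbb{E}\sup_{\f{x}\in S}\frac{1}{m}\sum_{i=1}^m\varepsilon_i\chi(\langle\psi_i,\f{x}\rangle)$, and the Ledoux--Talagrand contraction principle (applicable since $\chi$ is $(1/A)$-Lipschitz with $\chi(0)=0$) bounds this by $\frac{2}{A}\,\mathbb{E}\sup_{\f{x}\in S}\frac{1}{m}\sum_{i=1}^m\varepsilon_i\langle\psi_i,\f{x}\rangle=\frac{2}{A}W_m(S;\psi)$. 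Chaining the displays and multiplying through by $A\sqrt{m}$ yields exactly $\inf_{\f{x}\in S}\|\Psi\f{x}\|_2\ge A\sqrt{m}\,Q_{2A}(S;\psi)-2\sqrt{m}\,W_m(S;\psi)-At$ on the stated event. The one point requiring care is the two-threshold bookkeeping — using the indicator at level $A$ for the pointwise $\ell_1$-bound, but the level-$2A$ indicator (via $\chi$) to bring in $Q_{2A}$ — together with invoking the contraction principle in its one-sided (no absolute value) form; the rest is routine.
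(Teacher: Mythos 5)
Your proof is correct and is essentially the standard argument for the small-ball method, matching the proofs in the cited references (Koltchinskii--Mendelson and especially Tropp, Proposition 5.1): the $\ell_1/\ell_2$ pointwise bound reducing to a Bernoulli process, the Lipschitz bump function $\chi$ sandwiched between indicators at thresholds $A$ and $2A$, bounded differences for the concentration of the sup, then symmetrization followed by the one-sided Ledoux--Talagrand contraction to produce the mean empirical width. Since the paper cites this proposition without reproducing a proof, there is no in-paper argument to compare against; yours is the canonical one. One small remark: with $c_j=1/m$, McDiarmid actually yields $\mathbb{P}(Z\ge\mathbb{E}Z+t/\sqrt{m})\le e^{-2t^2}$, which is stronger than the $e^{-t^2/2}$ in the statement, so you have slack in the constants — that is fine. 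The only point you might flag for completeness is measurability of the supremum over a possibly uncountable $S$, which is handled by separability of the underlying process (continuity of $x\mapsto\chi(\langle\psi_i,x\rangle)$) and is glossed over in the standard references as well.
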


\medskip

\begin{proof}[{Proof of \Cref{main_result_deterinistic_dictionary}}]
Since $\f{D} = [\mathbf{d}_1,\ldots,\mathbf{d}_n] \in \R^{d \times n}$ satisfies the robust NSP of order $s$ with constants $\gamma$ and $\tau$, we know that
    \begin{equation} \label{eq:NSP:eq:proofthm1}
    \inf_{\f{x} \in S_{\gamma}} \|\mathbf{D}\mathbf{x}\|_2 \ge \frac{1}{\tau}
	\end{equation}
for the set $S_{\gamma} := \left\{\mathbf{x} \in \R^{n}: \|\mathbf{x}_T\|_2 >  \frac{\gamma}{\sqrt{s}}\|\mathbf{x}_{T^c}\|_1 \ \text{for some} \ |T|\le s \right\} \cap \mathbb{S}^{n-1}$.

To show that $\f{\Phi}\f{D}$ fulfills the robust NSP of order $s$ with constants $\gamma$ and $\widetilde{\tau}$, we need to show that
\[
{\inf_{x \in S_{\gamma}}{\|\f{\Phi}\mathbf{D}x\|_2} \ge \frac {1}{\widetilde{\tau}}.}
\]
To do this, we will use {Proposition \ref{small_ball_method}} for $\Psi = \f{\Phi}\f{D}$ and $\f{\psi} = \f{D}^T \f{\varphi}/\sqrt{m}$ and $S = S_{\gamma}$.

We first estimate the mean empirical width $W_m(S_{\gamma},\mathbf{D}^{T}\varphi/\sqrt{m})$. We consider the sets $\Sigma_{s} := \{ \mathbf{x}: \|\f{x}\|_{0}\le s, \ \|\mathbf{x}\|_{2}=1\}$, $T_{\gamma,s} = \big\{ \mathbf{x}: \|\mathbf{x}_{T}\|_2\ge \frac{\gamma}{\sqrt{s}}\|\mathbf{x}_{T^c}\|_1 \text{ for some} \ |T|\le s\big\}$ and $D_s$ the convex hull of the set $\Sigma_2$.{ By {\cite[Lemma 2]{DLR18}} we have $T_{\gamma,s}\cap \mathbb{B}_2^n \subset (2+\gamma^{-1})D_s$}. Therefore, since $S_{\gamma} \subset T_{\gamma,s}\cap \mathbb{B}_{2}^n$, by H{\"o}lder's inequality we get
\begin{equation*}
    W_m(S_{\gamma},\mathbf{D}^{T}\varphi/\sqrt{m}) \le W_m(T_{\gamma,s} \cap \mathbb{B}_2^n, \mathbf{D}^{T}\varphi/\sqrt{m}) \le  (2+\gamma^{-1})W_m( D_s,\mathbf{D}^{T}\varphi/\sqrt{m}).
\end{equation*}
Let $V:= m^{-1/2}\sum_{i=1}^m \varepsilon_i\varphi_i$. Since $\sum_{i=1}^m \varepsilon_i \f{D}^T \varphi_i/\sqrt{m} = \f{D}^T V$,
we see that
\begin{equation}
\label{main_problem_paper}
    W_m( D_s,\mathbf{D}^{T}\varphi/\sqrt{m}) = m^{-1}\mathbb{E}\sup_{\mathbf{x} \in D_s} \langle \mathbf{x}, \f{D}^T V \rangle  = m^{-1}\mathbb{E}\sup_{\mathbf{x} \in \Sigma_s} \langle \mathbf{x},\mathbf{D}^{T} V \rangle = m^{-1}\mathbb{E}\left[\sum_{i=1}^s \left((\mathbf{D}^{T}V)_i^{*}\right)^2\right]^{1/2} .
\end{equation}
In the second equality, we used the fact that the supremum of the linear form over the convex hull $D_s$ and over $\Sigma_s$ coincides. Under the assumptions of \Cref{main_result_deterinistic_dictionary}, it holds that $m\gtrsim \log(\e n/s)^{\max(2\alpha-1,1)}$ and therefore {Proposition \ref{Proposition_Main_Result}} implies that
\[
W_m(S_{\gamma},\mathbf{D}^{T}\varphi/\sqrt{m}) \leq  \frac{(2+\gamma^{-1})}{m}  \mathbb{E}\left[\sum_{i=1}^s \left((\mathbf{D}^{T}V)_i^{*}\right)^2\right]^{1/2} \lesssim  \frac{(2+\gamma^{-1})\rho}{m} \sqrt{s\log\Big(\frac{\e n}{s}\Big)},
\]
where $\rho = \Theta(1)$ is the uniform upper bound on the $\ell_2$-norms of the columns of $\f{D}$.

Furthermore, for $A = \frac{K}{2 \tau\sqrt{m}} $, where $K$ and $c$ are the constants of condition (2) of \Cref{assumption:1}, we lower bound $Q_{2A}(S_{\gamma};\f{D}^T \varphi)$
\[
\begin{split}
Q_{2A}(S_{\gamma};\f{D}^T \varphi/\sqrt{m}) &= \inf_{\f{x} \in S_{\gamma}} \mathbb{P}(|\langle \f{x},\f{D}^T\varphi/\sqrt{m}\rangle| \ge 2 A) = \inf_{\f{x} \in S_{\gamma}} \mathbb{P}(|\langle \f{D}\f{x},\varphi/\sqrt{m}\rangle| \ge 2 A) \\
&= \inf_{\f{x} \in S_{\gamma}} \mathbb{P}\left(\Big|\Big\langle \frac{\f{D}\f{x}}{\|\f{D}\f{x}\|_2},\varphi\Big\rangle\Big| \ge \frac{2 A \sqrt{m}}{\|\f{D}\f{x}\|_2}\right) \geq \inf_{\f{z} \in \mathbb{S}^{n-1}}  \mathbb{P}\left(|\langle \f{z},\varphi\rangle| \ge 2 A \tau \sqrt{m} \right) \geq c,
\end{split}
\]
Putting this together with the above estimate for the mean empirical width, we obtain by {Proposition \ref{small_ball_method}} that, with probability $1-e^{-t^2}$, there exits an absolute constant $C>0$ such that
\begin{equation*}
    \inf_{x \in S_{\gamma}}{\|\Phi\mathbf{D}x\|_2} \ge \frac{K c}{2\tau} - 2\sqrt{m} W_m(S_{\gamma},\mathbf{D}^{T}\varphi) - \frac{K}{2\tau \sqrt{m}} t \ge \frac{K c}{2\tau}-\frac{C(2+\gamma^{-1})\rho}{\sqrt{m}} \sqrt{s\log\Big(\frac{\e n}{s}\Big)} - \frac{K}{2\tau \sqrt{m}} t.
\end{equation*} 

{Now we choose $t = \frac{c\sqrt{m}}{4}$ and $m = \frac{16\tau^2(C(2+\gamma^{-1})\rho)^2 s\log(\frac{\e n}{s})}{(Kc)^2}$ to obtain that}

\begin{equation*}
    \inf_{x \in S_{\gamma}}{\|\Phi\mathbf{D}x\|_2} \ge \frac{3Kc}{16} \left(\frac{1}{\tau}\right)  .
\end{equation*}

with probability at least $1-e^{-mc^2/8}$. This concludes the proof as it shows that under the conditions stated by \Cref{main_result_deterinistic_dictionary}, with probability at least $1-e^{-\Omega(m)}$, $\f{\Phi} \f{D}$ fulfills the robust NSP of order $s$ with constants $\gamma$ and $\widetilde{\tau} := \frac{16}{3Kc}\tau = \Theta (\tau)$.

Finally, on this event, the two inequalities bounding the distances $\|\f{\hat{x}} - \f{x}_0\|_2$ and $\|\f{\hat{z}} - \f{z}_0\|_2$ between the output of $\ell_1$-synthesis \cref{l1_synthesis_method} $\f{\hat{x}}$ and $\f{\hat{z}}$ and follow from {Proposition \ref{prop:recguarantee}} and the fact that $\|\f{\hat{z}} - \f{z}_0\|_2$ = $\|\f{D}(\f{\hat{x}} - \f{x}_0)\|_2 \leq \|\f{D}\| \|\f{\hat{x}} - \f{x}_0\|_2$.
\end{proof}

\medskip

We continue with the proof of \Cref{main_corollary_deterministic_dictionary}.

\medskip

\begin{proof}[{Proof of \Cref{main_corollary_deterministic_dictionary}}]
To show the statement of \Cref{main_corollary_deterministic_dictionary}, we proceed analogously to the proof of \Cref{main_result_deterinistic_dictionary}. \Cref{main_result_deterinistic_dictionary} can be used to upper bound the mean empirical width $W_m(S_{\gamma},\mathbf{D}^{T}\varphi/\sqrt{m})$, but to lower bound the marginal {tail} function, we proceed slightly differently. 

Similarly as in the proof of \Cref{weak_khintchine} for a unit norm $\f{a} \in \mathbb{S}^{d-1}$, we {have}
\begin{equation*}
    \mathbb{E}\left[|\langle \varphi, \f{a} \rangle |^4\right] = \sum_{i=1}^{d} \sum_{i_1<i_2<\ldots<i_t}\sum_{d_1+\ldots d_t =4, d_i\ge 1} \binom{4}{d_1,\ldots,d_t}\left(\prod_{j=1}^t \f{a}_{i_j}^{d_j} \mathbb{E}[ \xi_{i_j}^{d_j}]\right)
\end{equation*}
By a standard symmetrization argument, we may assume that the $\xi_i$ are symmetric random variables. Using the symmetry and {(1) of \Cref{assumption:2}}, we note that the $\mathbb{E}[ \xi_{i_j}^{d_j}]$ are zero for all odd $d_j$. The only possibilities for non-zero terms are such that either
\begin{enumerate}
	\item There exists $i_j \in \N$ such that $i_j = 4$ and $\prod_{j=1}^t \f{a}_{i_j}^{d_j} \mathbb{E}[ \xi_{i_j}^{d_j}] = \f{a}_{i_j}^4 \mathbb{E}[ \xi_{i_j}^4]$. By {moment} condition (3) of \Cref{assumption:2}, $\f{a}_{i_j}^4 \mathbb{E}[ \xi_{i_j}^4] \leq \f{a}_{i_j}^4 \lambda^4 4^{4 \alpha}$.
	\item There exist $i_j \neq i_k \in \N$ such that $d_{i_j}=d_{i_k} = 2$ and $\prod_{j=1}^t \f{a}_{i_j}^{d_j} \mathbb{E}[ \xi_{i_j}^{d_j}] = \f{a}_{i_j}^2 \f{a}_{i_k}^2 \mathbb{E}[ \xi_{i_j}^2 \xi_{i_k}^2] = \f{a}_{i_j}^2 \f{a}_{i_k}^2 \mathbb{E}[ \xi_{i_j}^2]  \mathbb{E}[\xi_{i_k}^2] \leq \f{a}_{i_j}^2 \f{a}_{i_k}^2 \lambda^2 2^{4 \alpha} \lambda^2 2^{4 \alpha} = \f{a}_{i_j}^2 \f{a}_{i_k}^2 \lambda^4 4^{4 \alpha}$, using independence of the $\xi_i$.
\end{enumerate}
{Therefore,
\[
\mathbb{E}\left[|\langle \varphi, \f{a} \rangle |^4\right]  \leq \lambda^4 4^{4 \alpha} \left(\|\f{a}\|_4^4 + \|\f{a}\|_2^4\right) \le 2\lambda^4 4^{4 \alpha}
\]
and thus
\[
\|\langle \varphi, \f{a} \rangle |\|_{L_4} \leq 2^{1/4}\lambda 4^{\alpha}. 
\]
Furthermore, by independence of the $\{\xi_i\}_{i=1}^m$ it follows that $\|\langle \varphi, \f{a} \rangle |\|_{L_2} = \|\f{a}\|_2 = 1$, {using (2) of \Cref{assumption:2}}. Finally, using the Paley-Zygmund inequality (cf., e.g., \cite[Lemma 7.16]{foucart2013invitation}), we conclude that
\[
\mathbb{P}(|\langle \f{\varphi} , \f{a} \rangle| \ge 1/2) \geq \frac{\big(\mathbb{E}\left[|\langle \varphi, \f{a} \rangle |^2\right] - 1/2\big)^2}{\mathbb{E}\left[|\langle \varphi, \f{a} \rangle |^4\right]}	 \geq \frac{1}{4} \frac{1}{2^{1/4}\lambda 4^{\alpha}},
\]}
which shows that $\varphi$ actually fulfills condition (2) of \Cref{assumption:1} for $K = 1/2$ and {$c = 2^{-1/4}\lambda^{-1} 4^{-\alpha-1}$}. Therefore, the statement of \Cref{main_corollary_deterministic_dictionary} follows from \Cref{main_result_deterinistic_dictionary}.
\end{proof}

\subsection{Proof of  \Cref{D_satisfies_NSP_whp}} \label{sec:proof:thm3}

Here, we show the results of \Cref{sec:prop:rand:dict}. {We start with a proposition that plays a similar role as Proposition \ref{Proposition_Main_Result}.

\begin{proposition}
\label{Proposition_Main_Result_basis_case}
Suppose $\mathbf{D} \in \mathbb{R}^{d\times n}$ fulfills \Cref{assumption:3} {for some $\beta \ge \frac{1}{2}$}, let $V_{\psi}:= \frac{1}{\sqrt{d}} \sum_{i=1}^d \varepsilon_i \psi_i$ with the $\psi_i \in \mathbb{R}^n$ as in \Cref{assumption:3}, and where the $\epsilon_i$ are independent Rademacher random variables that are independent of the $\{\psi_i\}_{i=1}^m$. If $d\gtrsim \log(\frac{\e n}{s})^{\max(2\beta-1,1)}$, then
\begin{equation*}
    \mathbb{E}\Big[\sum_{i=1}^s \big((V_{\psi})_i^{*}\big)^2\Big]^{1/2} \lesssim \sqrt{s\log\Big(\frac{\e n}{s}\Big)}.
\end{equation*}
\end{proposition}}

\medskip

\begin{proof}[{Proof of {Proposition} \ref{Proposition_Main_Result_basis_case}}]
	The proof is very similar to the proof of {Proposition \ref{Proposition_Main_Result}}. In particular, for $V_{\psi}= d^{-1/2} \sum_{j=1}^d \varepsilon_j \psi_j$, {we can define, for each $i \in [n]$ such that $\|\langle \f{e}_i , V_{\psi}\rangle\|_2 >0$, $z_i:= (V_{\psi})_i / \|(V_{\psi})_i\|_2 = \langle \f{e}_i , V_{\psi}\rangle / \|\langle \f{e}_i , V_{\psi}\rangle\|_2$} , we apply \Cref{lemma:assumption2:khintchine} as (3) of \Cref{assumption:3} coincides with (3) of \Cref{assumption:2}. It follows that $\|z_i\|_{L_p} \lesssim \sqrt{p}$ for all $i \in [n]$ if $d\gtrsim \log(\e n/s)^{\max(2\beta-1,1)}$. A direct application of \Cref{theorem:mendelson} shows the result.
\end{proof}

 \medskip
 
\begin{proof}[{Proof of \Cref{D_satisfies_NSP_whp}}]
This statement can be shown analogously to \Cref{main_result_deterinistic_dictionary} by using {Proposition \ref{small_ball_method}} as follows: We choose $\Psi = \f{D}$, whose rows are distributed as $\psi/\sqrt{d}$, and $S = S_{\gamma}$. The mean empirical width can be bounded by {Proposition} \ref{Proposition_Main_Result_basis_case}, we bound the marginal tail function such that
\[
Q_{2A}(S_{\gamma};\psi/\sqrt{d}) = \inf_{\f{x} \in S_{\gamma}} \mathbb{P}(|\langle \f{x},\psi\rangle| \ge 2 A\sqrt{d}) = \inf_{\f{x} \in S_{\gamma}} \mathbb{P}(|\langle \f{x},\psi\rangle| \ge 2 A \sqrt{d}) \geq c
\]
for $A = K/(2 \sqrt{d})$, where $K$ and $c$ are the constants of the small-ball assumption of \Cref{assumption:3}.
\end{proof}

\subsection{Proofs of \Cref{main_result_first_random_dictionary,main_result_second_random_dictionary}} \label{sec:proofs:random:main}
In this section, we only stress the differences between the proofs of \Cref{main_result_first_random_dictionary} and \Cref{main_result_second_random_dictionary} and the results above.

\medskip

\begin{proof}[{Proof of \Cref{main_result_first_random_dictionary}}]
Using \Cref{D_satisfies_NSP_whp}, it follows for the given assumptions, the dictionary $\f{D}$ fulfills the robust NSP of order $s$ with constants $\gamma$ and $\tau$ on an event $E_1$ with $\mathbb{P}(E_1^c) = e^{-\Omega(d)}$. 

So we may proceed as in the deterministic case, apply the small-ball method for $\Phi\mathbf{D}$ and the problem boils down to prove that, for $V:=m^{-1/2}\sum_{i=1}^m \varepsilon_i \varphi_i$, where $\varphi_i$ are as in \Cref{assumption:1} and $\varepsilon_1,\ldots,\varepsilon_m$ independent Rademacher variables,
\begin{equation}
\label{main_problem_paper_random}
    \mathbb{E}\left[\sum_{i=1}^s (\mathbf{D}^{T}V)_i^{*})^2\right]^{1/2} \lesssim \sqrt{\rho s\log\left(\frac{\e n}{s}\right)},
\end{equation}
where $\rho:= \max_{j\in [n]}\mathbb{E}\|\f{d}_j\|_2^2$ is a bound on the maximum of the expected squared column $\ell_2$-norms of $\f{D}$. Importantly, we note that the maximum in $\rho$ is taken outside of the expectation. Furthermore, the expectation is now taken with respect to both $\f{\Phi}$ and $\f{D}$. The constant $\rho$ can be estimated such that
\begin{equation*}
    \rho= \max_{j\in [n]}\mathbb{E}\|{\f{d}_j}\|_2^2 = \max_{j\in [n]} \sum_{i=1}^d \mathbb{E}[\f{d}_{ji}^2] =O(1)
\end{equation*}
since {$\f{d}_{ji} = \psi_{i}[j] / \sqrt{d}$} and $\psi_{i}$ has $\psi_{i}[j]$ has a second moment of order $O(1)$ for any $i \in [n]$ and $j \in [d]$. The proof of \cref{main_problem_paper_random} is then a straightforward modification of the analogous bound for deterministic dictionary, with the main difference that we condition on $\mathbf{D}$ and take the expectation afterwards. Then we can conclude using {Proposition \ref{small_ball_method}}.
\end{proof}

\medskip

\begin{proof}[{Proof of \Cref{main_result_second_random_dictionary}}]
Recall the definition $V:=m^{-1/2}\sum_{i=1}^m \varepsilon_i \varphi_i$ with $\varphi_i$ as in \Cref{assumption:4} and the $\epsilon_i$ as usual. Following the same approach as for \Cref{main_result_first_random_dictionary}, our problem boils down to bound \cref{main_problem_paper_random}. The following proof is similar to \cite[Lemma 6.5]{MendelsonLearning15}.

\begin{proposition}
Consider $\mathbf{D}$ and $\f{\Phi}$ as in \Cref{main_result_second_random_dictionary}. Then,
\begin{equation*}
    \mathbb{E}\left[\sum_{i=1}^s (\mathbf{D}^{T}V)_i^{*})^2\right]^{1/2} \lesssim  \sqrt{s\log\Big(\frac{\e n}{s}\Big)}.
\end{equation*}
\end{proposition}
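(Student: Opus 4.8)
The plan is to condition on the measurement matrix $\f{\Phi}$---equivalently on the vector $V=m^{-1/2}\sum_{i=1}^m\varepsilon_i\varphi_i\in\R^d$---and to exploit that, although $\varphi$ has only a finite second moment, the entries of $\f{D}$ have $\log(n)$ sub-Gaussian moments and $\f{D}$ is independent of $\f{\Phi}$. First I would note that, conditionally on $V$, the coordinates $w_j:=(\f{D}^{T}V)_j=\langle\f{d}_j,V\rangle=d^{-1/2}\sum_{i=1}^d\mathbf{d}_{ij}V_i$ of $\f{D}^{T}V$ are i.i.d.\ and mean zero, since the columns $\f{d}_1,\dots,\f{d}_n$ of $\f{D}$ are i.i.d. Applying the weak-moment Khintchine inequality \Cref{weak_khintchine} to the independent centered coordinates $\mathbf{d}_{1j},\dots,\mathbf{d}_{dj}$ (whose $L^p$-norms are $O(\sqrt p)$ for $p\le\log n$) tested against the \emph{fixed} vector $V/\sqrt d$, I would obtain an absolute constant $c'>0$ such that the conditional $L^p$-norm satisfies $\|w_j\|_{L^p\mid V}\le c'\,(\|V\|_2/\sqrt d)\,\sqrt p$ for all $2\le p\le\log n$.

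Next I would run, conditionally on $V$, the truncation argument behind \Cref{lemma:mendelson} (i.e.\ \cite[Lemma 6.5]{MendelsonLearning15}), with scale parameter $L:=c'\|V\|_2/\sqrt d$. Setting $p^{\ast}:=\min(\log n,\,2\log(en/s))$---which is $\le\log n$, so the conditional moment bound is applicable at $p^{\ast}$---I would split $w_j=w_j\mathds{1}_{\{|w_j|\le\kappa\}}+w_j\mathds{1}_{\{|w_j|>\kappa\}}$ at the level $\kappa:=\e L\sqrt{p^{\ast}}$. The bounded part contributes at most $s\kappa^2$ to $\sum_{i=1}^s(w_i^{\ast})^2=\max_{|T|=s}\sum_{j\in T}w_j^2$, while for the unbounded part H\"older's inequality with exponent $q=p^{\ast}/2$ together with \Cref{Lemma3.7_Mendelson} bounds $n\,\mathbb{E}[\,w_1^2\mathds{1}_{\{|w_1|>\kappa\}}\mid V]$ by $O(L^2 s\log(en/s))$; here a short case split is needed, since when $s<e\sqrt n$ one has $p^{\ast}=\log n\le 2\log(en/s)$ and otherwise $p^{\ast}=2\log(en/s)$. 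Altogether this yields $\mathbb{E}[\,\sum_{i=1}^s(w_i^{\ast})^2\mid V]\lesssim L^2 s\log(en/s)=(c')^2\,(\|V\|_2^2/d)\,s\log(en/s)$.

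Finally I would take the expectation over $V$. Since the $\varepsilon_i$ are Rademacher and independent of the $\varphi_i$, the cross terms vanish, so $\mathbb{E}\|V\|_2^2=\mathbb{E}\|\varphi\|_2^2=\sum_{k=1}^d\mathbb{E}[\xi_k^2]\le\lambda^2 d$ by condition (3) of \Cref{assumption:4}; hence $\mathbb{E}_V[(c')^2\|V\|_2^2/d]=O(1)$ and therefore $\mathbb{E}\sum_{i=1}^s(w_i^{\ast})^2\lesssim s\log(en/s)$. Taking square roots and using Jensen's inequality ($\mathbb{E}[X]^{1/2}\le(\mathbb{E}X)^{1/2}$), exactly as at the end of the proof of \Cref{lemma:mendelson}, gives the claimed bound.

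I expect the main obstacle to be conceptual rather than computational: one must observe that the heavy tails of $\varphi$ become harmless once $V$ is frozen, because $\|V\|_2^2$ enters the final estimate only through its expectation, and that it is precisely the $\log(n)$---rather than $\log(n/s)$---sub-Gaussian moments of $\f{D}$ that let the Mendelson truncation be carried down to the scale $\sqrt{\log(en/s)}$ via the case split on $p^{\ast}$ (which is also why this theorem, unlike the earlier ones, requires $\log(n)$ moments). Once the conditional moment bound is set up, controlling the truncated and untruncated pieces with H\"older's inequality and \Cref{Lemma3.7_Mendelson} is routine and parallels the proof of \Cref{lemma:mendelson}.
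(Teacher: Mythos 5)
Your proposal is correct, and it proves the statement by a genuinely different route than the paper's. Both arguments begin by conditioning on $V$ and by applying the weak-moment Khintchine inequality \Cref{weak_khintchine} to the columns $\f{d}_j$ against the frozen vector $V/\sqrt d$, yielding the conditional moment bound $\|\langle\f{d}_j,V\rangle\|_{L^p\mid V}\lesssim(\|V\|_2/\sqrt d)\sqrt p$ for $p\le\log n$; both also conclude with the observation $\mathbb{E}\|V\|_2^2=O(d)$ (for which only the second moment of $\varphi$ is needed) followed by Jensen. Where you diverge is the heart of the conditional estimate. The paper establishes a per-index tail bound for the $j$-th order statistic,
\[
\mathbb{P}\bigl((\f{D}^TV)_j^{\ast}\ge t\mid V\bigr)\le\binom{n}{j}\,\mathbb{P}\bigl(|\langle\f{d}_1,V\rangle|>t\sqrt d\mid V\bigr)^j,
\]
which requires the conditional independence of the coordinates of $\f{D}^TV$, then picks a $j$-dependent moment level $p_j=\log(en/j)$, applies Markov and integrates the tail to obtain $\mathbb{E}_{\f{D}}\bigl[((\f{D}^TV)_j^{\ast})^2\mid V\bigr]\lesssim(\|V\|_2^2/d)\log(en/j)$, and sums over $j\le s$. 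You instead run the truncation argument underlying \Cref{lemma:mendelson} conditionally on $V$, with a single truncation scale $\kappa=\e L\sqrt{p^{\ast}}$ and a case split on $p^{\ast}=\min(\log n,2\log(en/s))$; this handles all $j\le s$ at once and, unlike the paper's bound, does not use independence of the coordinates of $\f{D}^TV$. Two small remarks. First, because you apply H\"older with exponent $q=p^{\ast}/2$ (rather than $q=p^{\ast}$ as in the proof of \Cref{lemma:mendelson}), you only consume moments up to $p^{\ast}$, not $2p^{\ast}$; as a result, the same computation with the simpler choice $p^{\ast}=\log(n/s)$ closes the argument as well, so the $\log n$-moment requirement is imposed by the theorem's hypotheses (and is what the paper's $j=1$ term genuinely uses), rather than being forced by your truncation scheme as your final paragraph suggests. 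Second, as the paper notes, the i.i.d.\ structure of the columns can be relaxed to independent non-identical entries, and your truncation route accommodates this with no change since it never exploits identical distribution.
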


\begin{proof}
Conditioning on $V$, we obtain, for arbitrary $t > 0$,
\begin{equation} 
    \mathbb{P}\left((\mathbf{D}^T V)_j^{*} \ge t \Big| V\right) \le \binom{n}{j}\mathbb{P}\left(|\langle \f{d}_1^T,V\rangle| > t\sqrt{d} \Big| V \right)^j \le \binom{n}{j} \frac{\|\langle \f{d}_1^T,V\rangle\|_p^{pj}}{(\sqrt{d}t)^{pj}},
\end{equation}
using the independence of the $\f{d}_1,\ldots,\f{d}_n$. Since $V$ is fixed, we apply \Cref{weak_khintchine} for $\mathbf{d}_1$ and write

\begin{equation*}
     \mathbb{P}\left((\mathbf{D}^T V)_j^{*} \ge t|V\right) \lesssim \binom{n}{j} \left(\frac{\sqrt{p}\|V\|_2}{\sqrt{d}t}\right)^{pj}
\end{equation*}
Setting $\sqrt{d}t=u\|V\|_2\sqrt{\log(en/j)}$ and $p = \log(en/j)$, it follows then that
\begin{equation*}
    \mathbb{P}\left((\mathbf{D}^T V)_j^{*}\ge u\frac{\|V\|_2}{\sqrt{d}}\sqrt{\log(en/j)} \bigg| V \right)\le \left(\frac{e}{u}\right)^{j \log(\frac{\e n}{j})}.
\end{equation*}
Integrating the tails we obtain $\mathbb{E}_{\mathbf{D}}(\mathbf{D}^TV)_j^{*})^2 \lesssim \frac{\|V\|_2^2}{d}\log(en/j)$. We now take the expectation both sides with respect to $V$ and get $\mathbb{E}(\mathbf{D}^TV)_j^{*})^2 \lesssim \log(en/j)$ because $\mathbb{E}\|V\|_2^2 = O(d)$. To see this, we compute

{\begin{equation*}
    \mathbb{E}\|V\|_2^2 = \sum_{j=1}^d \mathbb{E} [V_j^2] = \sum_{j=1}^d \frac{1}{m} \left(\sum_{i=1}^m \mathbb{E}[(\varphi_{i}[j])^2] + \sum_{i\neq k} \mathbb{E}\varepsilon_i\varepsilon_k (\varphi_j[i])(\varphi_j[k])\right).
    \end{equation*} }
Now, because of the bounded variance assumption, the first term in the summand is $O(d)$ and by independence of the Rademacher sequence $\varepsilon_i$ the second term in the summand is zero, which shows $\mathbb{E}\|V\|_2^2 = O(d)$.

To finish the proof, just sum over $j\in [s]$, take the square root on both sides and apply Jensen's inequality.
\end{proof}

{The remainder of the proof of \Cref{main_result_second_random_dictionary} is the same as that of \Cref{main_result_first_random_dictionary}}.
\end{proof}

\section{Outlook} \label{sec:conclusion}
In this work we presented an analysis of the $\ell_1$-synthesis method under several, very general assumptions on both the dictionary and measurement matrix, which include in particular heavy-tailed measurements. Specified to the case of a trivial dictionary, i.e., in the standard compressed sensing setting, we showed that a control of only the first $\log(\e n/s)$ moments of the entries of a measurement matrix suffices to establish recovery guarantees for the optimal number of measurements.

It is true that the random assumptions we are assuming for our measurement matrix might not represent directly implementable solutions for many practical applications of compressed sensing, as typically, structural constraints are present. It would be interesting to see whether the presented tools can be helpful to analyze $\ell_1$-synthesis for such settings.

While prior research has shown that $\ell_1$-synthesis can be successful to recover a dictionary-sparse signal $\mathbf{z}_0$ without uniqueness in its sparse representation in the dictionary, we have only focussed on the recovery of both the signal and coefficient vector. It would but interesting to investigate how to establish results on such a non-uniform signal recovery from heavy-tailed measurements in future work.

\section*{Funding}
This work was supported by the National Science Foundation [NSF-IIS-1837991 to C.K.].

\bibliographystyle{alphaabbr}
\bibliography{HeavytailedSynthesisSparsity.bib}

\end{document}